\documentclass[journal]{IEEEtran}
\usepackage{amsmath,graphicx,amssymb,amsfonts,amsthm}
\usepackage{cite}
\usepackage{algorithm}
\usepackage{algorithmic}
\usepackage[font=scriptsize]{caption}
\usepackage[font=scriptsize]{subcaption}
\usepackage{balance}
\usepackage{mathtools}
\usepackage{stfloats}
\usepackage[usenames, dvipsnames]{color}
\graphicspath{{./Figures/}}

\label{English Chars}

\newcommand{\bp}{{\bf p}}

\theoremstyle{plain}
\newtheorem{theorem}{Theorem}
\newtheorem{proposition}{Proposition}

\newtheorem{corollary}{Corollary}

\theoremstyle{definition}
\newtheorem{definition}{Definition}

\theoremstyle{remark}
\newtheorem{remark}{Remark}

\begin{document}

\title{Cooperative Target Localization in RIS-Enabled ISAC Systems}


\author{Rouhollah Amiri, Abdollah Ajorloo, Mohammad Mahdi Mojahedian, Ahmad Reza Hassanshahi, Fabiola Colone
\thanks{R. Amiri, M. M. Mojahedian and A. R. Hassanshahi are with Department of Electrical Engineering, Sharif University of Technology, Tehran, Iran. A. Ajorloo and F. Colone are with Sapienza University of Rome. (e-mail: amiri@sharif.edu, abdollah.ajorloo@uniroma1.it, mojahedian@sharif.edu, ahmadreza.hassanshahi@ee.sharif.edu, fabiola.colone@uniroma1.it).}
}

\maketitle
\begin{abstract}
This paper develops a cooperative integrated sensing and communication
(ISAC) framework in which a multi-antenna base station (BS)
simultaneously localizes a target and serves multiple communication
users, a subset of which is equipped with reconfigurable intelligent
surfaces (RISs). The Fisher information matrix (FIM) for target
positioning is derived, explicitly characterizing its dependence on the
BS beamforming coefficients, RIS phase profiles, and bistatic sensing
geometry. Under the stated scaling assumptions, coherent RIS phase
alignment yields a Fisher-information gain that scales quadratically
with the number of RIS elements, whereas independent random phases
provide a linear gain in expectation. We formulate a sensing-centric
joint active and passive beamforming problem that minimizes the
position error bound (PEB) subject to per-user
signal-to-interference-plus-noise ratio (SINR) and transmit-power
constraints. The resulting non-convex problem is addressed through an
iterative successive convex approximation (SCA) procedure that solves
a sequence of convex subproblems. We further develop a target
localization estimator that fuses one direct time-of-arrival (ToA)
measurement, one angle-of-arrival (AoA) measurement, and multiple
RIS-assisted indirect ToA measurements. Under small measurement errors
and asymptotically efficient first-stage ToA/AoA estimation, the
estimator covariance approaches the Cram\'er--Rao bound (CRB) to first
order. Numerical results validate the analytical scaling laws and
demonstrate the localization gains enabled by cooperative RIS-equipped
users.
\end{abstract}

\begin{IEEEkeywords}
Integrated sensing and communication (ISAC), Cram\'er--Rao bound (CRB),
Fisher information matrix (FIM), reconfigurable intelligent surface
(RIS), target localization, joint beamforming.
\end{IEEEkeywords}

\IEEEpeerreviewmaketitle

\vspace{-.5cm}
\section{Introduction} \label{sec:Introduction}
\IEEEPARstart{T}{he} ongoing advancement toward 6G wireless networks is characterized by a migration to higher frequency bands, including millimeter-wave (mmWave) and Terahertz (THz) spectra, enabling radio signals to support both conventional data communication and high-resolution environmental perception \cite{zhang2022terahertz,liu2022integrated}. This dual functionality underpins Integrated Sensing and Communication (ISAC), which seeks to efficiently share spectral and hardware resources while exploiting sensing--communication synergies \cite{liu2022integrated,gonzalez2024integrated,luo2025isac}. ISAC is expected to support applications including autonomous systems, smart cities, and industrial automation \cite{liu2020joint}. Orthogonal Frequency-Division Multiplexing (OFDM), owing to its ubiquity in wireless standards, robustness to multipath fading, and demonstrated efficacy in radar applications, is a compelling waveform for high-precision sensing in ISAC systems \cite{lu2025mimo,su2025isac,li2019multi,keskin2024fundamental}.

Higher frequencies, while enabling high-accuracy sensing and high-rate communication, suffer from severe path loss and susceptibility to blockage. Reconfigurable intelligent surfaces (RISs) can mitigate these limitations by passively reshaping propagation environments via programmable phase shifts and creating virtual LoS paths \cite{wu2021intelligent}. RIS-enhanced ISAC systems can thus improve both communication reliability and sensing accuracy in non-LoS scenarios \cite{meng2024intelligent}, motivating a rapidly growing body of literature on RIS-assisted ISAC.

The existing literature on RIS-assisted ISAC can be broadly
categorized according to the design objective.
Communication-centric studies optimize the BS beamforming and RIS
configuration to improve communication performance while imposing
sensing requirements \cite{luo2022joint,hua2022joint,zhu2024intelligent}.
Sensing-centric works instead optimize metrics such as radar SINR,
the Cram\'{e}r--Rao bound (CRB), or target illumination gain subject
to communication quality-of-service constraints
\cite{shao2022target,song2023intelligent,xu2024joint}.
A broader class of joint designs optimizes the active and passive
beamformers to balance sensing and communication performance through
weighted or constrained formulations
\cite{liu2022proximal,sankar2023beamforming,hua2023secure,
liu2023snr,liu2022joint,wei2023multi,xing2023joint,
meng2023subspace,liu2024fractional}. These studies cover a range of
single- and multi-user/target settings and employ metrics including
radar SINR, mutual information, CRB, and communication rate.

RIS-assisted ISAC has also been extended along several complementary
directions. Energy-efficiency-oriented designs jointly optimize the
BS beamforming and RIS coefficients under sensing requirements
\cite{liu2026energy}; multi-functional RIS architectures incorporate
reflection, refraction, and amplification capabilities
\cite{zhou2026multifunctional}; hybrid beamforming has been developed
for RIS-enabled mmWave OFDM ISAC systems
\cite{wang2025hybrid}; and multi-RIS deployment, including RIS
positions, orientations, and sizes, has been optimized for realistic
mmWave environments \cite{li2026deployment}. Despite these different
objectives and implementations, these works predominantly employ RISs
as separately deployed propagation infrastructure whose configuration
is jointly designed with the BS transmission.

A related but distinct line of work exploits UEs themselves as
cooperative sensing anchors. In \cite{guo2025ue}, UEs either passively
receive BS--target--UE echoes or actively probe the environment and
forward the resulting range information to the BS. In contrast, we
consider RIS-equipped downlink terminals as passive bistatic
illumination nodes: the terminals simultaneously receive communication
service while their RISs create additional
BS--RIS--target--BS sensing paths, with the BS remaining the sensing
receiver. We therefore focus on the localization-information structure
of this topology, its scaling with RIS size and the number of
cooperative terminals, and sensing-centric beamforming under per-user
communication constraints. Specifically, we consider a 2D scenario where a BS equipped with $N_t$ transmit and $N_r$ receive antennas simultaneously localizes a target and serves $\widetilde{K}$ single-antenna users, of which $K < \widetilde{K}$ host RISs with $N_s$ reflecting elements at known locations. These RIS-equipped users receive communication data while creating additional bistatic sensing paths, whereas the remaining $\widetilde{K}-K$ users operate purely in communication mode. We jointly optimize the BS beamforming coefficients and RIS phase shifts to minimize the position error bound (PEB), derived from the CRB, under per-user SINR constraints. The key contributions of this work are as follows:

\begin{itemize}
    \item We consider an ISAC scenario where a subset of users is equipped with RISs to enhance sensing performance, while the remaining users operate purely in communication mode. This framework investigates the trade-off between prioritizing RIS-assisted users, which improve target localization, and communication-centric users, which may degrade sensing quality.

    \item We analyze how RIS phase shifts and beamforming coefficients can be designed to minimize the target position estimation error while ensuring all users meet predefined SINR thresholds. The resulting non-convex problem is handled by an iterative SCA procedure in which convex subproblems are solved successively.

\item We analytically show that the Fisher information contributed by
an RIS-assisted path scales as $N_s^2$ under coherent phase alignment
and as $N_s$ in expectation under independent random phases. We further
derive the localization gain for nonidentical, geometry-dependent
RIS-assisted propagation gains. Provided that the average
RIS-assisted information contribution remains finite and bounded away
from zero as the number of cooperative users increases, the PEB scales
as $1/(N_s\sqrt{K})$ under coherent phase alignment and as
$1/\sqrt{KN_s}$ under independent random phases.


    \item We propose an estimator that localizes the target by combining one noisy angle-of-arrival (AoA) meas\\
    urement, $K$ noisy time-of-arrival (ToA) measurements obtained indirectly via the RIS-assisted users, and one noisy ToA measurement acquired directly from the target. Under small measurement errors and asymptotically efficient first-stage ToA/AoA estimates, the covariance matrix of the proposed estimator approaches the CRB to first order.
\end{itemize}

The rest of the paper is organized as follows. Section~\ref{sec:System model} presents the system model. Section~\ref{sec:Estimation-Theoretic Fundamental Limits} derives the FIM and analyzes the impact of optimal and random RIS phase assignments on the PEB. Section~\ref{sec:Dual-Functional Joint Active and Passive Beamforming} formulates the joint beamforming problem and its local SCA solution. Section~\ref{sec:Efficient Estimator Design} develops the position estimator, and Section~\ref{sec:Numerical Simulations} presents numerical results. Section~\ref{sec:Conclusion} concludes the paper\footnote{Bold lowercase and uppercase letters denote vectors and matrices, respectively. We use $\mathbb{E}\{\cdot\}$, $(\cdot)^\mathsf{T}$, $(\cdot)^*$, $(\cdot)^\mathsf{H}$, $\mathrm{Tr}(\cdot)$, $\mathrm{vec}(\cdot)$, $\|\cdot\|$, and $\|\cdot\|_F$ for expectation, transpose, conjugate, conjugate transpose, trace, vectorization, Euclidean norm, and Frobenius norm, respectively. The operators $\mathrm{diag}(\cdot)$ and $\otimes$ denote diagonalization and the Kronecker product, respectively. We denote the circularly symmetric complex Gaussian and uniform distributions by $\mathcal{CN}(\mu,\sigma^2)$ and $\mathcal{U}(a,b)$, respectively. For any positive integer $L$, let $[L]\triangleq\{1,\ldots,L\}$.}.


\section{System Model}\label{sec:System model}
\subsection{System Configuration}
As illustrated in Fig. \ref{FigConfig}, consider a MIMO BS, the position of which is known and denoted by $\mathbf{q}=[x_q,y_q]^\mathsf{T} \in \mathbb{R}^2$, equipped with an ISAC transmitter with $N_t$ antennas and a radar receiver with $N_r$ antennas operating on a single hardware platform. This system is employed to detect and localize a single target at position $\mathbf{p}=[x_p,y_p]^\mathsf{T} \in \mathbb{R}^2$ and to serve $\tilde{K}$ downlink single-antenna users $ \{\mathcal{U}_k\}_{k=1}^{\tilde{K}} $ with positions $\mathbf{u}_k=[x_{u_k},y_{u_k}]^\mathsf{T} \in \mathbb{R}^2, k=1,\ldots,\tilde{K}$. 
We consider a scenario in which some users $ \{\mathcal{U}_k\}_{k=1}^{K} $ ($ K<\tilde{K} $) are RIS-enabled cooperative terminals (e.g., roadside units or connected vehicles) equipped with $N_s$ reflecting elements and employed to enhance target positioning\footnote{
Note that the classic use-case of RIS in an ISAC system is to serve both the users and target via a separately deployed framework by providing strong NLOS paths. In this scenario, however, we aim to exploit the already-equipped RIS users to enable or enhance the target positioning performance.}.
The localization configuration is depicted in Fig. \ref{FigConfig}.
The position and orientation of RIS-enabled users are considered as a prior knowledge at this stage.

\begin{figure*}[!t]
    \centering
    \includegraphics[width=0.66\textwidth]{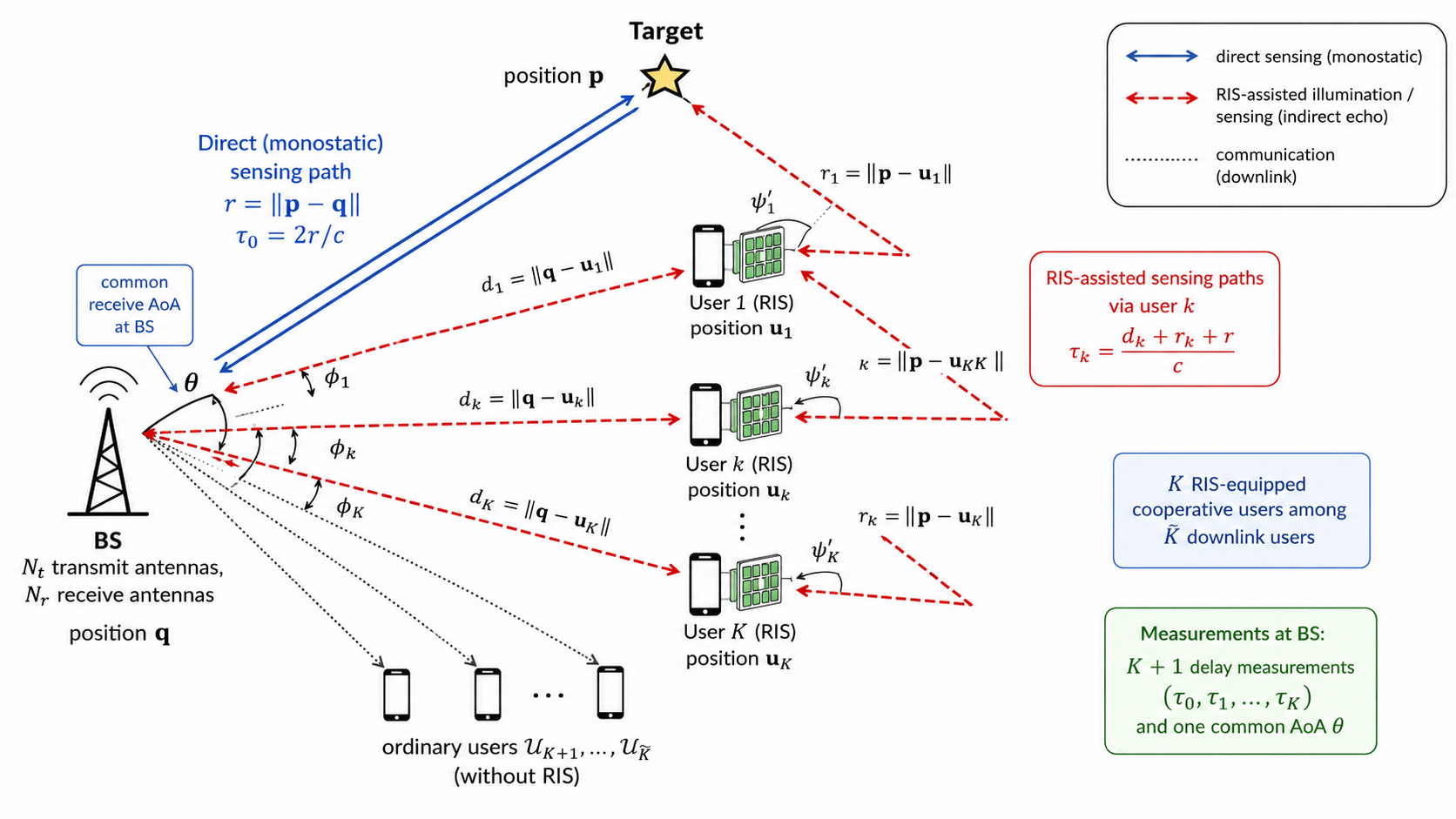}
    \caption{Detailed system configuration. The BS serves $\tilde K$ downlink users, of which $K$ are RIS-equipped cooperative terminals. The direct sensing path is monostatic with delay $\tau_0=2r/c$ and range $r=\|\mathbf{p}-\mathbf{q}\|$. The $k$th RIS-assisted cooperative path follows BS--RIS$_k$--target--BS with delay $\tau_k=(d_k+r_k+r)/c$, BS--RIS distance $d_k=\|\mathbf{q}-\mathbf{u}_k\|$, and target--RIS distance $r_k=\|\mathbf{p}-\mathbf{u}_k\|$. Since every modeled echo makes its final hop from the target to the BS, all resolved paths share the common receive AoA $\theta$ at the BS, whereas the RIS-assisted paths differ in their delays and transmit-side geometry. Dotted links indicate downlink communication only.}
    \label{FigConfig}
\end{figure*}

\subsection{Transmitter Model}\label{subsec:Transmitter Model}
Let $\mathbf{X}[n] \in \mathbb{C}^{N_t\times M}$ be the transmitted ISAC signal matrix for the $n$th OFDM subcarrier, where $-\frac{N}{2} \le n \le \frac{N}{2}$ and $M>N_t$ is the length of signal frame.
The matrix $\textbf{X}[n]$ is given by $\mathbf{X}[n]=\mathbf{W}\mathbf{S}[n]$, where $\mathbf{W}$ denotes the beamforming matrix to be designed to satisfy both radar and communications requirements. The rows of $\mathbf{S}[n] \in \mathbb{C}^{\tilde{K}\times M}$ are modeled as mutually uncorrelated unit-power user streams, i.e., $\mathbb{E}\{\frac{1}{M}\mathbf{S}[n]\mathbf{S}^\mathsf{H}[n]\}=\mathbf{I}_{\tilde{K}}$. Accordingly, the frame-averaged transmit covariance used in the FIM analysis is $\mathbf{R_X}=\mathbb{E}\{\frac{1}{M}\mathbf{X}[n]\mathbf{X}^\mathsf{H}[n]\}=\mathbf{W}\mathbf{W}^\mathsf{H}$. This is a second-order covariance assumption and does not require arbitrary payload symbols to be strictly orthogonal in every realization. The total transmitted power in each subcarrier is given by
\begin{equation}
P_\mathrm{T} \triangleq \mathbb{E}\!\left\{
\frac{1}{M}\mathrm{Tr}\!\left(\mathbf X[n]\mathbf X^\mathsf{H}[n]\right)
\right\}=\mathrm{Tr}(\mathbf W\mathbf W^\mathsf{H})=\|\mathbf W\|_F^2 .
\end{equation}

\subsection{Radar Channel}
As illustrated in Fig.~\ref{FigConfig}, the BS receives the target echo
through one direct (monostatic) path, indexed by $k=0$, and $K$
RIS-assisted (bistatic) paths, indexed by $k=1,\ldots,K$. The direct
path follows BS--target--BS, whereas the $k$th indirect path follows
BS--RIS$_k$--target--BS. Thus, all paths share the final target-to-BS
propagation segment and have the same receive angle of arrival $\theta$;
the RIS-assisted paths have different illumination directions and delays.

Let $r=\|\mathbf p-\mathbf q\|$, $d_k=\|\mathbf q-\mathbf u_k\|$,
and $r_k=\|\mathbf p-\mathbf u_k\|$ denote the BS--target,
BS--RIS$_k$, and RIS$_k$--target distances, respectively. 
propagation speed. 
Further, let $\varphi_k$ denote the azimuth of RIS$_k$ as
viewed from the BS; $\theta$ is the azimuth of the target as viewed
from the BS.

Each RIS controls the phase of its $N_s$ reflecting elements through
the diagonal matrix $\mathbf\Omega_k=\mathrm{diag}(\boldsymbol\omega_k)$,
where
\begin{align}\label{eq:Omega_k}
	\boldsymbol\omega_k
	= [e^{j\omega_{k,0}},\ldots,e^{j\omega_{k,N_s-1}}]^\mathsf{T}.
\end{align}
The phase profile $\boldsymbol\omega_k$ is
a design variable which is supposed to be optimized (e.g., based on an estimated target position from a
previous snapshot processing).
Then, let $\varphi_k'$ and $\psi_k'$ be, respectively, the incident angle of
the BS signal and the departure angle toward the target, measured in
the local array coordinates of RIS$_k$. 
With $\mathbf b(\alpha)$
denoting the RIS steering vector in direction $\alpha$, the coherent
response along the BS--RIS$_k$--target path is
\begin{equation}
	g_k=\mathbf b^\mathsf{H}(\psi_k')\mathbf\Omega_k
	\mathbf b(\varphi_k').
\end{equation}  

We model the BS transmit array, each RIS, and the BS receive array as
uniform linear arrays with half-wavelength inter-element spacing,
centered at their respective reference points. Their steering
vectors are denoted by $\mathbf a(\alpha)\in\mathbb C^{N_t}$,
$\mathbf b(\alpha)\in\mathbb C^{N_s}$, and
$\mathbf c(\alpha)\in\mathbb C^{N_r}$, respectively. For example, the
transmit steering vector is \cite{van2002optimum}
\begin{align}\label{eq: steering vector}
	\mathbf a(\alpha)
	= \left[e^{-j\pi\frac{N_t-1}{2}\sin\alpha},\ldots,
	e^{j\pi\frac{N_t-1}{2}\sin\alpha}\right]^\mathsf{T},
\end{align}
and $\mathbf b(\alpha)$ and $\mathbf c(\alpha)$ have the same form
with $N_t$ replaced by $N_s$ and $N_r$, respectively.

The path coefficients account for the propagation loss and, for the
indirect paths, the coherent RIS response. Assuming isotropic array
elements with unit antenna gains, they are given by
\begin{align}
	\gamma_0
	&= \left(\frac{\lambda^2}{(4\pi)^3r^4L_0}\right)^{1/2},
	\\
	\gamma_k
	&= g_k\left(\frac{\zeta\lambda^2}
	{(4\pi)^4d_k^2r_k^2r^2L_k}\right)^{1/2},
	\quad k=1,\ldots,K,
\end{align}
where $\lambda$ is the wavelength, $\zeta$ is the bistatic radar
cross section of the RIS elements, $L_0$ accounts for additional
two-way attenuation and system losses on the direct path, and $L_k$
is the corresponding loss factor for the $k$th indirect path.

Modeling the target as a point scatterer \cite{skolnik1962introduction},
we can now express the target response matrices
$\mathbf G_k\in\mathbb C^{N_r\times N_t}$ as
\begin{align}\label{eq: Channel G_k}
	\mathbf G_0
	&= \gamma_0\mathbf c(\theta)\mathbf a^\mathsf{H}(\theta),
	\nonumber\\
	\mathbf G_k
	&= \gamma_k\mathbf c(\theta)\mathbf a^\mathsf{H}(\varphi_k),
	\quad k=1,\ldots,K.
\end{align}
The common receive steering vector reflects the shared target-to-BS
segment, whereas the transmit steering vector identifies the BS
illumination direction for each path.\\

\subsection{Radar Receiver}
The received signal for the $n$th subcarrier is given by
\begin{align}\label{eq:Received Signal Radar}
	\mathbf{Y}_{\mathrm{R}}[n] = \sum_{k=0}^{K} \beta_k e^{-j2\pi n\Delta f \tau_k} \mathbf{G}_k \mathbf{X}[n] + \mathbf{Z}_{\mathrm{R}}[n]
\end{align}
where $\beta_k \in \mathbb{C}$ accounts for the target reflection coefficient and any other scaling factor in the $ k $th path not included in the corresponding target response matrix. 
The term $\Delta f=B/(N+1)$ denotes the subcarrier spacing for a total bandwidth of $B$, and $\tau_k$ stands for the time of arrival (TOA) for the $k$th path.
The matrix $\mathbf{Z}_{\mathrm{R}}[n]\in \mathbb{C}^{N_r\times M}$ represents additive white complex Gaussian noise. Throughout the FIM derivation, $\sigma_R^2$ denotes the variance of each real/imaginary component; equivalently, the complex-sample covariance is $2\sigma_R^2\mathbf I_{N_r}$. This convention is used consistently in Appendix~\ref{sec:Calculation of the FIM}.

\subsection{Communications Receiver}
The received signal at the communication side, for subcarrier $n$, can be modeled as
\begin{align}\label{eq: Received Signal Comm}
\mathbf{Y}_{\mathrm{C}}[n] = \mathbf{HX}[n] + \mathbf{Z}_{\mathrm{C}}[n]
\end{align}
where $\mathbf{Z}_{\mathrm{C}}[n]\in \mathbb{C}^{\tilde{K} \times M}$ represents a white complex Gaussian noise with covariance matrix $\sigma^2_{\mathrm{C}} \mathbf{I}_{\tilde{K}}$, and $\mathbf{H}=[\mathbf{h}_1,\mathbf{h}_2,\ldots,\mathbf{h}_{\tilde{K}}]^\mathsf{H} \in \mathbb{C}^{{\tilde{K}}\times N_t}$ denotes the communication channel matrix with independently distributed entries, which is considered to be known to the BS.

\subsection{Performance Metrics}
We consider the CRB as a metric for target parameter estimation accuracy, which is a bound on the covariance matrix of any unbiased estimator.
Based on the received signal model in \eqref{eq:Received Signal Radar}, the parameter vector $\boldsymbol{\gamma}\in \mathbb{R}^{2K+4}$ includes the target position and the nuisance reflection coefficients\footnote{We aim to locate the target based on TOA and AOA measurements, i.e., $\tau_k$'s and $\theta$, and relation of $\beta_k$'s on $\mathbf{p}$ will not be exploited; thus, $\beta_k$'s will be considered as nuisance parameters.}, i.e.,
\begin{align}\label{eq:gamma}
\boldsymbol{\gamma} = [\mathbf{p}^\mathsf{T},\boldsymbol{\beta}^\mathsf{T}]^\mathsf{T}
\end{align}
where $\boldsymbol{\beta}=[\boldsymbol{\beta}_0^\mathsf{T},\boldsymbol{\beta}_1^\mathsf{T},\ldots,\boldsymbol{\beta}_K^\mathsf{T}]^\mathsf{T}$, each of which containing the real and imaginary parts as $\boldsymbol{\beta}_k = [\boldsymbol{\beta}_k^{\mathcal{R}}, \boldsymbol{\beta}_k^{\mathcal{I}}]^\mathsf{T}$.

Let $\hat{\boldsymbol{\gamma}}$ denote an estimate of $\boldsymbol{\gamma}$ based on observation $\mathbf{Y}_{\mathrm{R}}[n]$'s.
The mean squared error (MSE) matrix of $\hat{\boldsymbol{\gamma}}$ satisfies the following inequality \cite{kay}
\begin{align}\label{eq: Ineq 1}
\mathbb{E}\{(\hat{\boldsymbol{\gamma}}-\boldsymbol{\gamma})(\hat{\boldsymbol{\gamma}}-\boldsymbol{\gamma})^\mathsf{T} \}\succeq\mathbf{J}_{\boldsymbol{\gamma}}^{-1}
\end{align}
where $\mathbf{J}_{\boldsymbol{\gamma}}$ is the Fisher information matrix (FIM) of the parameter vector $\boldsymbol{\gamma}$.
Let $\hat{\mathbf{p}}$ be an estimate of the target position, \eqref{eq: Ineq 1} gives a lower bound on the corresponding MSE as
\begin{align}\label{eq: Ineq 2}
\mathbb{E}\{(\hat{\mathbf{p}}-\mathbf{p})(\hat{\mathbf{p}}-\mathbf{p})^\mathsf{T} \}\succeq \text{CRB}(\mathbf{p})
\end{align}
where $\text{CRB}(\mathbf{p}) =  \left[\mathbf{J}_{\boldsymbol{\gamma}}^{-1}\right]_{1:2, 1:2}$.
Different functions of $\text{CRB}(\mathbf{p})$ can be used as the performance metric, and the typical choices are trace, maximum eigenvalue and determinant, which also referred to as A-optimality, E-optimality and D-optimality metrics, respectively.
In this paper, we consider A-optimality as an estimation-theoretic metric for ISAC joint active and passive beamforming design.

Furthermore, we consider the per-user SINR as a communication-related metric.
By representing the beamforming matrix as $\mathbf{W}=[\mathbf{w}_1,\mathbf{w}_2,\ldots,\mathbf{w}_{\tilde{K}}]$, the associated SINR for the $k$th user can be written as
\begin{align}\label{eq: SINR_k}
\text{SINR}_k = \frac{|\mathbf{h}_k^\mathsf{H}\mathbf{w}_k|^2}{\sum\nolimits_{i=1,i\neq k}^{\tilde{K}} |\mathbf{h}_k^\mathsf{H}\mathbf{w}_i|^2 + \sigma^2_C}
\end{align}

As the focus of the proposed scheme is on sensing-centric ISAC, the
RIS-enabled terminals are utilized primarily to enhance sensing
accuracy rather than communication performance. In particular, the
RIS phase profiles are designed to coherently steer the reflected BS
signal toward the sensing target, rather than toward the communication
receiver. Therefore, the RIS-mediated field is not intentionally
beamformed to enhance the BS--user communication link. Accordingly, in
the derivation of \eqref{eq: SINR_k}, RIS-mediated BS--user paths and
local RIS-to-receiver coupling are neglected. This is an explicit
architecture assumption: the externally mounted RIS and the
communication antenna are assumed to be sufficiently isolated, and the
conventional BS--user channel is assumed to dominate the communication
link.\footnote{The analysis further assumes (i) perfect BS knowledge of
	$\mathbf{H}$ and the RIS positions/orientations; (ii) synchronization
	and calibration sufficient for coherent processing; (iii) dominant
	resolvable specular sensing paths with diffuse clutter and unmodeled
	multipath neglected; (iv) known Swerling-I RCS statistics, but not
	known instantaneous reflection coefficients; and (v) ideal continuous
	RIS phases and dominant LoS BS--RIS/RIS--target links. Relaxing these
	assumptions is left for future work.}


\section{Estimation-Theoretic Fundamental Limits}\label{sec:Estimation-Theoretic Fundamental Limits}
In this section, we aim to analyze the fundamental limits of target localization by deriving the corresponding FIM.

\subsection{Equivalent FIM of Target Position}
To facilitate the analysis, we consider a mapping from $\boldsymbol{\gamma}$ to an intermediate parameter vector, i.e., the target position-related channel parameters denoted by $\boldsymbol{\eta}=[\boldsymbol{\tau}^\mathsf{T},\theta,\boldsymbol{\beta}^\mathsf{T}]^\mathsf{T} \in \mathbb{R}^{3K+4}$, where $\boldsymbol{\tau}=[\tau_0,\tau_1,\ldots,\tau_K]^\mathsf{T}$ and $\boldsymbol{\beta}$ is defined below \eqref{eq:gamma}.
The channel-parameter vector $\boldsymbol{\eta}$ is a differentiable function of $\boldsymbol{\gamma}$. Using the FIM chain rule, we obtain
\begin{align}\label{eq:J_eta}
\mathbf{J}_{\boldsymbol{\gamma}}=\mathbf{T}\mathbf{J}_{\boldsymbol{\eta}}\mathbf{T}^\mathsf{T}
\end{align}
where $\mathbf{J}_{\boldsymbol{\eta}}$ stands for the FIM of $\boldsymbol{\eta}$, and $\mathbf{T}$ is the associated Jacobian matrix, which are given, respectively, by
\begin{align}\label{eq:J_eta partiotioned}
\mathbf{J}_{\boldsymbol{\eta}} \triangleq 
\left[ \begin{matrix}
{{\mathbf{\Psi }}_{1}} & {{\mathbf{\Psi }}_{2}}  \\
\mathbf{\Psi }_{2}^{T} & {{\mathbf{\Psi }}_{3}}  \\
\end{matrix} \right]
\end{align}
and
\begin{align} \label{eq:Jacobian}
\mathbf{T} \triangleq \left(\frac{\partial \boldsymbol{\eta}}{\partial \boldsymbol{\gamma}^\mathsf{T}}\right)^\mathsf{T} = 
\left[ \begin{matrix}
\mathbf{\Xi } & \mathbf{O}  \\
\mathbf{O} & {{\mathbf{I}}_{2(K+1)}}  \\
\end{matrix} \right]
\end{align}
where the block matrices ${{\mathbf{\Psi }}_{1}}$, ${{\mathbf{\Psi }}_{2}}$ and ${{\mathbf{\Psi }}_{3}}$ are derived in Appendix \ref{sec:Calculation of the FIM}, and $\mathbf{\Xi}$ is given by
\begin{align}\label{eq:Xi}
\mathbf{\Xi} = \left[2c^{-1}\mathbf{e}_0,c^{-1}\mathbf{e}_1,\ldots,c^{-1}\mathbf{e}_K,r^{-1}\boldsymbol{\rho}\right]
\end{align}
where $\mathbf{e}_k=[\cos\theta+s_k \cos\varphi_k, \sin\theta+s_k \sin\varphi_k]^\mathsf{T}$ (the corresponding angles are shown in Fig. \ref{FigConfig}) with $s_k$ being a binary variable which is zero for $k=0$ and is one otherwise, $\boldsymbol{\rho} = [-\sin\theta,\cos\theta]^\mathsf{T}$, $r=\|\mathbf{p}-\mathbf{q}\|$ and $c$ denotes the wave propagation speed.
\begin{proposition} 
The CRB of target position is given by
\begin{align}\label{eq:CRB(p)}
\mathrm{CRB}(\mathbf{p}) = (\mathbf{\Xi}\mathbf{\Psi}_1\mathbf{\Xi}^\mathsf{T}-\mathbf{\Xi}\mathbf{\Psi}_2\mathbf{\Psi}_3^{-1}\mathbf{\Psi}_2^\mathsf{T}\mathbf{\Xi}^\mathsf{T})^{-1}
\end{align}
\end{proposition}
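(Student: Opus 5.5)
The plan is to combine the FIM chain rule \eqref{eq:J_eta} with the block structure of $\mathbf{T}$ in \eqref{eq:Jacobian}, and then take a Schur complement. The Jacobian is block diagonal: $\mathbf{\Xi}$ maps $\mathbf{p}$ to $(\boldsymbol{\tau},\theta)$, and the identity block carries $\boldsymbol{\beta}$. Substituting the partition \eqref{eq:J_eta partiotioned} into $\mathbf{J}_{\boldsymbol{\gamma}}=\mathbf{T}\mathbf{J}_{\boldsymbol{\eta}}\mathbf{T}^\mathsf{T}$ and multiplying blockwise gives
\begin{align*}
\mathbf{J}_{\boldsymbol{\gamma}}=\left[\begin{matrix}\mathbf{\Xi}\mathbf{\Psi}_1\mathbf{\Xi}^\mathsf{T} & \mathbf{\Xi}\mathbf{\Psi}_2\\ \mathbf{\Psi}_2^\mathsf{T}\mathbf{\Xi}^\mathsf{T} & \mathbf{\Psi}_3\end{matrix}\right],
\end{align*}
where the top-left block is $2\times 2$ and corresponds to $\mathbf{p}$. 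Before this step I will check the dimension and transpose conventions. $\mathbf{\Xi}$ is $2\times(K+2)$, with columns ordered as $\tau_0,\ldots,\tau_K,\theta$, which matches the ordering of $\boldsymbol{\eta}$. The identity block has size $2(K+1)$, matching the real and imaginary parts of the $\beta_k$.

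I will also verify the columns of $\mathbf{\Xi}$ by direct differentiation, although the statement only needs them as given.
\begin{itemize}
\item From $\tau_0=2r/c$ we get $\partial\tau_0/\partial\mathbf{p}=2c^{-1}(\mathbf{p}-\mathbf{q})/r=2c^{-1}[\cos\theta,\sin\theta]^\mathsf{T}$.
\item From $\tau_k=(d_k+r_k+r)/c$, and since $d_k$ does not depend on $\mathbf{p}$, the gradient is $c^{-1}$ times the sum of the unit vectors along $\mathbf{p}-\mathbf{q}$ and $\mathbf{p}-\mathbf{u}_k$. This reproduces $c^{-1}\mathbf{e}_k$, with the angle in $\mathbf{e}_k$ interpreted as the geometric direction from RIS$_k$ to the target in Fig.~\ref{FigConfig}.
\item From $\theta=\operatorname{atan2}(y_p-y_q,x_p-x_q)$ we get $\partial\theta/\partial\mathbf{p}=r^{-1}[-\sin\theta,\cos\theta]^\mathsf{T}=r^{-1}\boldsymbol{\rho}$.
\end{itemize}

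Next I will apply the standard block-inversion formula. Assuming $\mathbf{\Psi}_3$ is invertible, the top-left $2\times2$ block of $\mathbf{J}_{\boldsymbol{\gamma}}^{-1}$ is the inverse of the Schur complement of $\mathbf{\Psi}_3$,
\begin{align*}
\left(\mathbf{\Xi}\mathbf{\Psi}_1\mathbf{\Xi}^\mathsf{T}-\mathbf{\Xi}\mathbf{\Psi}_2\mathbf{\Psi}_3^{-1}\mathbf{\Psi}_2^\mathsf{T}\mathbf{\Xi}^\mathsf{T}\right)^{-1}.
\end{align*}
Since $\mathrm{CRB}(\mathbf{p})=[\mathbf{J}_{\boldsymbol{\gamma}}^{-1}]_{1:2,1:2}$, this yields \eqref{eq:CRB(p)}. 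The result can equivalently be written as $(\mathbf{\Xi}(\mathbf{\Psi}_1-\mathbf{\Psi}_2\mathbf{\Psi}_3^{-1}\mathbf{\Psi}_2^\mathsf{T})\mathbf{\Xi}^\mathsf{T})^{-1}$, i.e., $\mathbf{\Xi}$ applied to the equivalent FIM of $(\boldsymbol{\tau},\theta)$.

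The main obstacle is justifying the invertibility conditions, not the algebra. First, $\mathbf{\Psi}_3$ must be nonsingular. Because each $\beta_k$ enters linearly with its own delay, $\mathbf{\Psi}_3$ is positive definite whenever every path has nonzero illuminating energy, i.e., $\gamma_k\mathbf{a}^\mathsf{H}(\varphi_k)\mathbf{W}\neq\mathbf{0}$. I would take this as a standing assumption, since it follows from the explicit expressions in Appendix~\ref{sec:Calculation of the FIM}. Second, the Schur complement must be positive definite. It is positive semidefinite as a Schur complement of a PSD FIM, and it is definite provided $\mathbf{\Xi}$ has rank $2$ and the $(\boldsymbol{\tau},\theta)$ equivalent FIM is nonsingular on its range. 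The rank condition holds because $\mathbf{e}_0\parallel[\cos\theta,\sin\theta]^\mathsf{T}$ is orthogonal to $\boldsymbol{\rho}$. I would state these identifiability conditions explicitly and otherwise cite the Schur-complement lemma.
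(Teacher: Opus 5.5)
Your proposal is correct and follows the paper's proof: form $\mathbf{J}_{\boldsymbol{\gamma}}=\mathbf{T}\mathbf{J}_{\boldsymbol{\eta}}\mathbf{T}^\mathsf{T}$ blockwise using the block-diagonal Jacobian, then obtain the top-left $2\times2$ block of its inverse from the block-inversion formula, i.e., as the inverse Schur complement with respect to $\mathbf{\Psi}_3$. Your additional steps are sound and make explicit what the paper's one-line argument leaves implicit: you recompute the columns of $\mathbf{\Xi}$, rightly reading the angle in $\mathbf{e}_k$ as the RIS-to-target direction rather than the BS-to-RIS azimuth used in $\mathbf{a}(\varphi_k)$, and you state when $\mathbf{\Psi}_3$ (distinct delays, nonzero per-path illumination) and the Schur complement are invertible.
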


\begin{IEEEproof} 
After direct matrix multiplications of \eqref{eq:J_eta}, the CRB of $\boldsymbol{\gamma}$ is obtained as
\begin{align}\label{eq:CRB of gamma}
\text{CRB}(\boldsymbol{\gamma}) = \left[ \begin{matrix}
\mathbf{\Xi}\mathbf{\Psi}_1\mathbf{\Xi}^\mathsf{T} & \mathbf{\Xi}\mathbf{\Psi}_2  \\
\mathbf{\Psi}_2^\mathsf{T}\mathbf{\Xi}^\mathsf{T} & {{\mathbf{\Psi }}_{3}}  \\
\end{matrix} \right]^{-1}
\end{align}
and $\text{CRB}(\mathbf{p})$ is the $2\times 2$ upper-diagonal block of $\mathbf{J}_{\boldsymbol{\gamma}}^{-1}$. Invoking block-matrix inversion formula on \eqref{eq:CRB of gamma} \cite{horn2012matrix} immediately completes the proof. \end{IEEEproof}

\begin{definition} 
The inner term in \eqref{eq:CRB(p)} is a kind of FIM degraded due to presence of nuisance parameter $\boldsymbol{\beta}$, which is referred to as equivalent FIM (EFIM), and denoted by $\mathbf{J}_e(\mathbf{p})$.
\end{definition}

\begin{definition} 
	The general ranging information (GRI) is defined as a $2\times 2$ matrix of the form $\lambda \mathbf{J}_\text{r}(\theta,\varphi,s)$, where $\lambda$ is a non-negative parameter denoting the so-called information intensity, and $\mathbf{J}_\text{r}(\theta,\varphi,s)$ is the associated direction matrix, given by
	\begin{align}\label{eq:General Ranging Information}
		\mathbf{J}_\text{r}(\theta,\varphi,s) \triangleq
		\left[ \begin{matrix}
			x^2 & xy \\
			xy & y^2  \\
		\end{matrix} \right]
	\end{align}
	where $x=(2-s)\cos\theta+s\cos\varphi$, $y=(2-s)\sin\theta+s\sin\varphi$ and $s$ is a binary variable.
	Note that this definition covers the notion of ranging direction for monostatic ranging ($s=0$) and bistatic ranging ($s=1$). $\mathbf{J}_\text{r}$ has only one non-zero eigenvalue with associated eigenvector $\mathbf{e}=[x,y]^\mathsf{T}$.
\end{definition}

\begin{definition}  The angular information (AI) is defined as a $2\times 2$ matrix of the form $\delta \mathbf{J}_\text{a}(\theta)$, where $\delta$ is a non-negative parameter denoting the so-called information intensity, and $\mathbf{J}_\text{a}(\theta)$ is the corresponding direction matrix, given by
\begin{align}\label{eq:Angular Information}
\mathbf{J}_\text{a}(\theta) \triangleq
\left[ \begin{matrix}
\sin^2\theta & -\sin\theta \cos\theta\\
-\sin\theta \cos\theta & \cos^2\theta  \\
\end{matrix} \right]
\end{align}
which has only one unit eigenvalue with associated eigenvector $\boldsymbol{\rho}=[-\sin\theta,\cos\theta]^\mathsf{T}$.
\end{definition}

\begin{theorem} 
When the $K+1$ paths are sufficiently separated in delay so that their cross-delay correlation terms are negligible (with $|\tau_i-\tau_k|\gtrsim 1/B$ serving as a practical resolution rule), the EFIM of the target position is well approximated by
\begin{align} \label{eq:EFIM Approx Final Form}
	\mathbf{J}_\text{e}(\mathbf{p}) \approx \sum_{k=0}^{K} \lambda_k \mathbf{J}_\text{r}(\theta,\varphi_k,s_k) + \delta_k \mathbf{J}_\text{a}(\theta)
\end{align}
where $\lambda_k$ and $\delta_k$ are the corresponding information intensities for the $k$th path, given by
\begin{align}
\lambda_k & = \frac{MN(N\!+\!2)}{(N\!+\!1)}\frac{\pi^2 B^2 |\beta_k|^2}{3c^2 \sigma_R^2} \ell_k \label{eq:lambda_k}\\
\delta_k & = M(N\!+\!1)\frac{|\beta_k|^2}{r^2 \sigma_R^2} (m_k-\frac{|n_k|^2}{\ell_k}) \label{eq:delta_k}
\end{align}
where
\begin{align}
\ell_k & \triangleq \mathrm{Tr}(\mathbf{G}_k\mathbf{R_X}\mathbf{G}_k^\mathsf{H}) \label{eq:a_i def}\\
m_k & \triangleq \mathrm{Tr}(\mathbf{\dot{G}}_k\mathbf{R_X}\mathbf{\dot{G}}_k^\mathsf{H}) \label{eq:b_i def}\\
n_k & \triangleq \mathrm{Tr}(\mathbf{G}_k\mathbf{R_X}\mathbf{\dot{G}}_k^\mathsf{H}) \label{eq:c_i def}
\end{align}
and $\mathbf{R_X}=\mathbf{WW}^\mathsf{H}$.
\end{theorem}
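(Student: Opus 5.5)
The approach is to compute $\mathbf J_{\boldsymbol\eta}$ from the Gaussian observation model and then form the Schur complement $\mathbf J_e(\mathbf p)=\mathbf\Xi(\mathbf\Psi_1-\mathbf\Psi_2\mathbf\Psi_3^{-1}\mathbf\Psi_2^\mathsf{T})\mathbf\Xi^\mathsf{T}$, as given by the Proposition. The received signal \eqref{eq:Received Signal Radar} is Gaussian with mean $\boldsymbol\mu[n]=\sum_k\beta_k e^{-j2\pi n\Delta f\tau_k}\mathbf G_k\mathbf X[n]$, so the Slepian--Bangs formula applies directly:
\begin{align*}
[\mathbf J_{\boldsymbol\eta}]_{ij}=\frac{1}{\sigma_R^2}\sum_{n}\mathrm{Re}\,\mathrm{Tr}\Big(\frac{\partial\boldsymbol\mu[n]^\mathsf{H}}{\partial\eta_i}\frac{\partial\boldsymbol\mu[n]}{\partial\eta_j}\Big).
\end{align*}
The factor $1/\sigma_R^2$ appears because of the per-component variance convention. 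I will replace $\mathbf X[n]\mathbf X^\mathsf{H}[n]$ by $M\mathbf R_{\mathbf X}$ via the covariance assumption. The derivatives are as follows:
\begin{itemize}
\item $\partial/\partial\tau_k$ gives $-j2\pi n\Delta f\,\beta_k e^{-j2\pi n\Delta f\tau_k}\mathbf G_k\mathbf X[n]$.
\item $\partial/\partial\theta$ gives $\sum_k\beta_k e^{(\cdot)}\dot{\mathbf G}_k\mathbf X[n]$. Here $\dot{\mathbf G}_k$ differentiates $\mathbf c(\theta)$, and also $\mathbf a(\theta)$ for $k=0$. For $k\ge1$, $\varphi_k$ is fixed by the known geometry.
\item $\partial/\partial\beta_k^{\mathcal R}$ and $\partial/\partial\beta_k^{\mathcal I}$ give $e^{(\cdot)}\mathbf G_k\mathbf X[n]$ and $j\,e^{(\cdot)}\mathbf G_k\mathbf X[n]$.
\end{itemize}
Every entry is then a subcarrier sum of weights $1$, $n$ or $n^2$, times $e^{-j2\pi n\Delta f(\tau_k-\tau_i)}$, times one of the traces $\ell_k,m_k,n_k$ or a cross-path analogue.

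Next I impose the resolution assumption. For $i\neq k$ the sums $\sum_n n^p e^{-j2\pi n\Delta f(\tau_i-\tau_k)}$ are Dirichlet-kernel type. They are small compared with the diagonal sums when $|\tau_i-\tau_k|\gtrsim1/B$, so I drop them. $\mathbf J_{\boldsymbol\eta}$ then decouples, apart from $\theta$, into per-path blocks in $(\tau_k,\beta_k^{\mathcal R},\beta_k^{\mathcal I})$, and the $\theta\theta$ entry becomes a sum of per-path terms. With $\sum_n 1=N+1$, $\sum_n n=0$ and $\sum_n n^2=N(N+1)(N+2)/12$, and using $\Delta f=B/(N+1)$, the following entries result:
\begin{itemize}
\item The $\tau_k\tau_k$ entry is $\frac{4\pi^2B^2}{(N+1)^2}\frac{N(N+1)(N+2)}{12}\frac{M|\beta_k|^2\ell_k}{\sigma_R^2}$, which is $\lambda_k c^2$ up to the stated constant.
\item The $\tau_k\beta_k$ cross terms vanish because $\sum_n n=0$.
\item The $\tau_k\theta$ terms likewise carry a factor $n$ and vanish.
\item The $\theta\beta_k$ terms are $\propto(N+1)M\,\mathrm{Re}/\mathrm{Im}(\beta_k^* n_k)$.
\item The $\beta_k\beta_k$ block is $(N+1)M\ell_k\mathbf I_2/\sigma_R^2$.
\end{itemize}

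Eliminating $\boldsymbol\beta$ by the Schur complement therefore leaves the $\tau$ entries unchanged. The $\theta\theta$ entry is reduced path by path by $|\beta_k|^2|n_k|^2/\ell_k$, which yields $(N+1)M|\beta_k|^2(m_k-|n_k|^2/\ell_k)/\sigma_R^2$. The result is a diagonal equivalent FIM for $(\boldsymbol\tau,\theta)$.

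Finally I map through $\mathbf\Xi$. Each $\tau_k$ column contributes the outer product of $c^{-1}\mathbf e_k$ (or $2c^{-1}\mathbf e_0$). Because $x=(2-s)\cos\theta+s\cos\varphi$, this equals $c^{-2}\mathbf J_\text r(\theta,\varphi_k,s_k)$, which absorbs the factor $2$ for $k=0$. The $\theta$ column contributes $r^{-2}\boldsymbol\rho\boldsymbol\rho^\mathsf{T}=r^{-2}\mathbf J_\text a(\theta)$, which produces the $1/r^2$ in $\delta_k$.

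The main obstacle is justifying that the off-diagonal cross-delay terms are negligible. The cross terms involving $\theta$ couple all paths, since $\theta$ is shared, so the approximation must be argued carefully. I would bound them by the Dirichlet kernel magnitude, which is $O(1/(N|\Delta f(\tau_i-\tau_k)|))$ relative to the diagonal, and state the approximation at that order. A second point is bookkeeping the constants: factors of $2$ from the real/imaginary convention, and the exact $\frac{N(N+2)}{N+1}$ factor. These I would check against the claimed $\lambda_k$.
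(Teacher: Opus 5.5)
Your proposal is correct and follows essentially the same route as the paper's Appendix~B: drop the cross-delay ($i\neq k$) terms, use the vanishing first-moment sum $f_1(0)=0$ (the paper carries it as $f_2(0)-f_1^2(0)/f_0(0)$, which reduces to the same thing), eliminate $\boldsymbol\beta$ path by path through the Schur complement to get the $|n_k|^2/\ell_k$ reduction, and map through $\mathbf\Xi$. Taking the Schur complement before or after applying $\mathbf\Xi$ makes no difference, since $\mathbf\Xi(\mathbf\Psi_1-\mathbf\Psi_2\mathbf\Psi_3^{-1}\mathbf\Psi_2^\mathsf{T})\mathbf\Xi^\mathsf{T}$ distributes; your constants check out, and absorbing the direct-path factor $2$ into $\mathbf J_\text{r}(\theta,\varphi_0,0)$ is cleaner than the appendix, which writes $\mathbf J_\text{r}=\mathbf e_k\mathbf e_k^\mathsf{T}$ for every $k$.
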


\begin{IEEEproof} 
The proof is given in Appendix \ref{sec:Proof of Theorem 1}. Furthermore, a simplified version of $\ell_k, m_k$ and $n_k$ are presented, in terms of corresponding steering vectors and their derivatives as well as the active and passive beamforming weights, at the end of this appendix.
\end{IEEEproof}

\begin{definition}
 Using the calculated EFIM, the positioning error bound (PEB), which determines the best achievable accuracy of any unbiased estimator, is defined as
\begin{align}\label{eq:PEB}
\mathcal{P} \triangleq \sqrt{\mathrm{Tr}(\mathbf{J}_\text{e}^{-1}(\mathbf{p}))}
\end{align}
\end{definition}

\begin{remark}
The intensity $\lambda_k$ depends on the signal bandwidth ($B$), SNR of the path (including the RIS gain embedded in $\ell_k$ for $k\ne0$), the integration time ($M$), the number of subcarriers ($N+1$), and also the relative geometry of BS, RIS-enabled users and the target. It also depends on the lengths of the transmit array ($N_t$), receive array ($N_r$), and RIS ($N_s$), as made more explicit in \eqref{eq:C0}.
The intensity $\delta_k$ depends on similar parameters in which signal bandwidth is replaced with inverse target-BS range.
\end{remark}
 
\begin{remark} 
The information direction matrices in \eqref{eq:EFIM Approx Final Form} are in three main categories:
1) The monostatic range direction (MRD) matrix, i.e., $\mathbf{J}_\text{r}$ for $k=0$, has only one eigenvector along with the BS to the target direction.
2) The angular direction matrix $\mathbf{J}_\text{a}$ has only one eigenvector orthogonal to that of the MRD.
3) Each bistatic range direction matrix, i.e., $\mathbf{J}_\text{r}$ for $k\ne0$, has only one eigenvector along with the combination of BS-target and user-target directions, or equivalently the direction orthogonal to the ellipse with the BS and the corresponding user as its focal points.
The linear combination of these information direction matrices with the presented intensities as \eqref{eq:EFIM Approx Final Form} yields the corresponding EFIM.
For a centered half-wavelength ULA and an isotropic transmit covariance, the angular-to-ranging information ratio for an indirect path is approximately
\begin{align}
\frac{\delta_k}{\lambda_k} = \frac{c^2(N+1)^2}{4B^2r^2N(N+2)}(N_r^2-1)\cos^2\theta, \quad k\ge 1.
\end{align}
Hence, at medium-to-long ranges and sufficiently large occupied bandwidth, the ranging term can dominate. To obtain transparent scaling laws and a tractable beamforming design, we use the following ranging-dominant approximation in Sections III-B and IV:
\begin{align} \label{eq:EFIM Approx Final Form (TOA only)}
	\mathbf{J}_\text{e}(\mathbf{p}) \approx \sum_{k=0}^{K} \lambda_k \mathbf{J}_\text{r}(\theta,\varphi_k,s_k).
\end{align}
The full delay-and-angle measurement model is retained in the localization estimator of Section~\ref{sec:Efficient Estimator Design}.
\end{remark}

\begin{remark}
As inferred from \eqref{eq:EFIM Approx Final Form}, the presence of multiple paths, while they can be resolved, enhances the corresponding EFIM.
When the paths are not resolvable, however, several other terms will also be included in \eqref{eq:EFIM Approx Final Form}, which, due to their semi-random phases leads to a severe fading-like effect, and highly degrades the overall EFIM. Thus, this paradigm is most useful when the occupied bandwidth and geometry make the direct and RIS-assisted delays sufficiently resolvable; carrier frequency alone does not determine delay resolution.
\end{remark}

\begin{remark}
Under the Swerling-I model for the RCS fluctuations \cite{swerling1960probability,skolnik1962introduction}, $|\beta_k|^2$ follows an exponential distribution with parameter $\mathcal{E}$. Taking the expectation of the ranging-dominant EFIM in \eqref{eq:EFIM Approx Final Form (TOA only)} gives
\begin{align} \label{eq:Expected EFIM}
	\bar{\mathbf{J}}_\text{e}(\mathbf{p}) = \mathbb{E}_{\boldsymbol{\beta}}\left\{\mathbf{J}_\text{e}(\mathbf{p})\right\} = \sum_{k=0}^{K} \bar{\lambda}_k \mathbf{J}_\text{r}(\theta,\varphi_k,s_k)
\end{align}
where $\bar{\lambda}_k$ follows \eqref{eq:lambda_k} with $|\beta_k|^2$ replaced by $\mathcal{E}$.
\end{remark}

\subsection{User-Aided Target Localization Accuracy Gain}

We now characterize the average localization accuracy enabled by the
RIS-assisted paths and investigate its scaling with respect to the
number of RIS-enabled users $K$ and the number of reflecting elements
$N_s$. The target and users are assumed to be uniformly placed in the covered area such that $\theta,\varphi_k\sim \mathcal{U}(0,2\pi)$. To eliminate any preferred spatial direction in the averaged
localization performance, we assume that the target and RIS-enabled
users follow a jointly rotationally invariant spatial distribution
around the BS. Furthermore, the target RCS fluctuations are modeled
according to the Swerling-I model.

We first derive the averaged EFIM while allowing the direct and
RIS-assisted paths to experience different geometry-dependent
propagation gains. Based on this result, we then characterize the
corresponding PEB gain and identify the conditions under which the
$K$- and $N_s$-scaling laws are preserved.



\begin{proposition}
\label{prop:avg_EFIM}
(Average EFIM)
	
	For the purpose of characterizing the $K$- and $N_s$-scaling, assume the rotationally invariant geometry model above, Swerling-I RCS fluctuations, and an isotropic average transmit covariance $\mathbb{E}\{\mathbf R_X\}=\frac{P_\mathrm{T}}{N_t}\mathbf I_{N_t}$. For the $k$th RIS-assisted path, define
	\begin{align}\label{eq:gamma_tilde}
		\widetilde{\gamma}_k \triangleq \frac{\gamma_k}{g_k}, \qquad k=1,\ldots,K,
	\end{align}
	such that
	\begin{align}
		|\gamma_k|^2=|\widetilde{\gamma}_k|^2|g_k|^2, \qquad k=1,\ldots,K.
	\end{align}
	Then, the EFIM averaged over the propagation geometry, RCS fluctuations, and, when applicable, random RIS phases is
	\begin{align}\label{eq:Asymptotic EFIM}
		\bar{\mathbf{J}}_\text{e}
		=
		C_0
		\left(
		2\bar{\Gamma}_0
		+
		\xi\sum_{k=1}^{K}\bar{\Gamma}_k
		\right)
		\mathbf{I}_2,
	\end{align}
	where
	\begin{align}\label{eq:C0}
		C_0
		=
		\frac{\pi^2MN(N\!+\!2)N_rB^2}
		{3(N\!+\!1)c^2}
		\frac{\mathcal{E}P_\mathrm{T}}{\sigma_R^2},
	\end{align}
	and
	\begin{align}
		\bar{\Gamma}_0
		&\triangleq
		\mathbb{E}\left\{|\gamma_0|^2\right\},
		\label{eq:Gamma0_general}
		\\
		\bar{\Gamma}_k
		&\triangleq
		\mathbb{E}\left\{
		|\widetilde{\gamma}_k|^2
		\left[1+\cos(\theta-\varphi_k)\right]
		\right\},
		\quad k=1,\ldots,K.
		\label{eq:Gammak_general}
	\end{align}
	Furthermore, $\xi=N_s^2$ when the optimal phase profile is adopted, i.e.,
	\begin{align}\label{eq:Optimal Phase RIS Asymptotic Analysis}
		\boldsymbol{\omega}_{k,i}
		=
		\frac{\pi}{2}(2i
		\!-\!N_s\!+\!1)
		(\sin\psi'_k-\sin\varphi'_k),
		\quad i=0,\ldots,N_s\!-\!1,
	\end{align}
	and $\xi=N_s$ when independent random RIS phases uniformly distributed over $[0,2\pi]$ are considered.

\end{proposition}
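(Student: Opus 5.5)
Starting from the ranging-dominant expected EFIM \eqref{eq:Expected EFIM}, $\bar{\mathbf J}_\text{e}=\sum_{k}\bar\lambda_k\mathbf J_\text{r}(\theta,\varphi_k,s_k)$, the plan has three steps. First, we evaluate $\ell_k$ under the isotropic average covariance. Second, we average the rank-one direction matrices over the rotationally invariant geometry. Third, we average $|g_k|^2$ over the two RIS phase models.

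\emph{Step 1 (evaluating $\ell_k$).} Using $\mathbf G_k=\gamma_k\mathbf c(\theta)\mathbf a^\mathsf{H}(\cdot)$ and $\mathbb E\{\mathbf R_X\}=\frac{P_\mathrm T}{N_t}\mathbf I$, linearity of the trace gives
\begin{align*}
\mathbb E\{\ell_k\}=|\gamma_k|^2\|\mathbf c\|^2\,\tfrac{P_\mathrm T}{N_t}\|\mathbf a\|^2=|\gamma_k|^2N_rP_\mathrm T .
\end{align*}
Substituting into \eqref{eq:lambda_k}, with $|\beta_k|^2$ replaced by $\mathcal E$, yields $\bar\lambda_k=C_0|\gamma_k|^2$, where $C_0$ is as in \eqref{eq:C0}. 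For $k\ge1$ we then write $|\gamma_k|^2=|\widetilde\gamma_k|^2|g_k|^2$.

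\emph{Step 2 (averaging the direction matrices).} Entrywise, $\mathbf J_\text{r}=\mathbf e\mathbf e^\mathsf T$.

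For $k=0$ we have $\mathbf e=2[\cos\theta,\sin\theta]^\mathsf T$, so $\mathbf J_\text{r}$ has entries $4\cos^2\theta$, $4\sin^2\theta$ and $4\sin\theta\cos\theta$. We condition on the range and average over $\theta$, which is uniform and independent of $r$ by rotational invariance. This gives $\mathbb E\{\mathbf J_\text{r}\}=2\mathbf I_2$, and hence the term $2\bar\Gamma_0$.

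For $k\ge1$ we write $\mathbf e_k=\mathbf u(\theta)+\mathbf u(\varphi_k)$. Its squared norm is $2+2\cos(\theta-\varphi_k)$. Joint rotational invariance means the distribution is unchanged when $(\theta,\varphi_k)$ is shifted by a common angle while the relative angle and the distances stay fixed. Hence, conditioned on the relative geometry, the common orientation is uniform, and $\mathbb E\{\mathbf e\mathbf e^\mathsf T\,|\,\text{rel.}\}=\tfrac12\|\mathbf e\|^2\mathbf I_2=[1+\cos(\theta-\varphi_k)]\mathbf I_2$. The distance-dependent factor $|\widetilde\gamma_k|^2$ depends only on the relative geometry, so it stays inside the expectation. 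This produces exactly $\bar\Gamma_k$ in \eqref{eq:Gammak_general}.

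The one point needing care here is $|g_k|^2$. The local RIS angles $\varphi_k',\psi_k'$ are also relative quantities, so $|g_k|^2$ is invariant under the common rotation. In both phase regimes below, $|g_k|^2$ in fact becomes constant: deterministically in the coherent case, and in mean, conditionally on the geometry, in the random case. It can therefore be factored out as $\xi$.

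\emph{Step 3 (RIS phases).} We have $g_k=\sum_i [\mathbf b(\psi'_k)]_i^*[\mathbf b(\varphi'_k)]_i e^{j\omega_{k,i}}$, and each term has unit modulus with phase
\begin{align*}
\pi\tfrac{2i-N_s+1}{2}(\sin\varphi'_k-\sin\psi'_k)+\omega_{k,i}.
\end{align*}

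With the choice \eqref{eq:Optimal Phase RIS Asymptotic Analysis}, every term has zero phase. Thus $g_k=N_s$ and $\xi=N_s^2$, for every geometry realization.

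With i.i.d.\ uniform phases independent of the geometry, the terms are independent, zero-mean and of unit modulus. The cross terms in $\mathbb E|g_k|^2$ vanish, giving $\mathbb E\{|g_k|^2\,|\,\text{geometry}\}=N_s$, i.e., $\xi=N_s$. By the tower property this factors out of $\bar\Gamma_k$ because the phases are independent of $|\widetilde\gamma_k|^2$ and of the angles.

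Summing the terms for $k=0$ and $k\ge1$ gives \eqref{eq:Asymptotic EFIM}. We expect the main obstacle to be Step 2: specifically, justifying the factorization of the direction matrix from the path gains, which relies on rotational invariance of the joint law and on $\widetilde\gamma_k$, $g_k$ depending only on relative geometry. We also need to replace $\mathbf R_X$ by its mean inside $\ell_k$, which is legitimate since $\ell_k$ is linear in $\mathbf R_X$ and $\mathbf W$ is independent of $\beta_k$.
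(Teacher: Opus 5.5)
Your proposal is correct and takes essentially the same route as the paper. It evaluates $\mathbb{E}\{\ell_k\}=P_\mathrm{T}N_r|\gamma_k|^2$ under the isotropic covariance, giving $\bar\lambda_k=C_0|\gamma_k|^2$. It then averages the rank-one direction matrices over rotations to obtain $2\bar\Gamma_0\mathbf I_2$ and $\bar\Gamma_k\mathbf I_2$, using $\|\mathbf e_k\|^2=2[1+\cos(\theta-\varphi_k)]$. Finally it uses $|g_k|^2=N_s^2$ under coherent alignment and $\mathbb{E}\{|g_k|^2\}=N_s$ under i.i.d.\ uniform phases. Your explicit conditioning on the relative geometry and the tower-property step in the random-phase case spell out what the paper's proof only asserts.
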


\begin{IEEEproof}
	Under $\mathbb{E}\{\mathbf R_X\}=\frac{P_\mathrm{T}}{N_t}\mathbf I_{N_t}$, \eqref{eq:Simplified l_k,m_k,n_k} gives, conditioned on the propagation geometry,
	\begin{align}\label{eq:Cal Simplied}
		\mathbb{E}\{\ell_0\}
		&=
		P_\mathrm{T}N_r|\gamma_0|^2,
		\nonumber\\
		\mathbb{E}\{\ell_k\}
		&=
		P_\mathrm{T}N_r|\widetilde{\gamma}_k|^2|g_k|^2,
		\quad k=1,\ldots,K.
	\end{align}
	
	Substituting \eqref{eq:Cal Simplied} into \eqref{eq:lambda_k}, averaging over the Swerling-I RCS fluctuations, and using the ranging-dominant EFIM in \eqref{eq:EFIM Approx Final Form (TOA only)} yield
	\begin{align}\label{eq:Je bar}
		\bar{\mathbf{J}}_\text{e}
		=
		C_0
		\bigg[
		&
		\mathbb{E}
		\left\{
		|\gamma_0|^2
		\mathbf{J}_\text{r}(\theta,\varphi_0,0)
		\right\}
		\nonumber\\
		&+
		\sum_{k=1}^{K}
		\mathbb{E}
		\left\{
		|\widetilde{\gamma}_k|^2
		|g_k|^2
		\mathbf{J}_\text{r}(\theta,\varphi_k,1)
		\right\}
		\bigg],
	\end{align}
	where $C_0$ is given in \eqref{eq:C0}. Note that $\varphi_0$ can be chosen arbitrarily since $\mathbf{J}_\text{r}(\theta,\varphi_0,0)$ is independent of $\varphi_0$.
	
	Under the rotationally invariant geometry assumption, the direct-path contribution satisfies
	\begin{align}
		\mathbb{E}
		\left\{
		|\gamma_0|^2
		\mathbf{J}_\text{r}(\theta,\varphi_0,0)
		\right\}
		=
		2\bar{\Gamma}_0\mathbf{I}_2.
	\end{align}
	
	For the $k$th RIS-assisted path, the corresponding ranging direction vector is
	\begin{align}
		\mathbf{e}_k
		=
		[\cos\theta+\cos\varphi_k,\,
		\sin\theta+\sin\varphi_k]^\mathsf{T},
	\end{align}
	whose squared norm is
	\begin{align}
		\|\mathbf{e}_k\|^2
		=
		2\left[1+\cos(\theta-\varphi_k)\right].
	\end{align}
	Hence, rotational averaging gives
	\begin{align}
		\mathbb{E}
		\left\{
		|\widetilde{\gamma}_k|^2
		\mathbf{J}_\text{r}(\theta,\varphi_k,1)
		\right\}
		=
		\bar{\Gamma}_k\mathbf{I}_2,
	\end{align}
	where $\bar{\Gamma}_k$ is defined in \eqref{eq:Gammak_general}.
	
	We next consider two cases for the RIS phase profile.
	
	\textit{Case 1:} For the optimal phase profile in \eqref{eq:Optimal Phase RIS Asymptotic Analysis}, the phases of all RIS elements are coherently aligned, yielding
	\begin{align}
		|g_k|^2=N_s^2.
	\end{align}
	Hence, $\xi=N_s^2$.
	
	\textit{Case 2:} For independent random RIS phases uniformly distributed over $[0,2\pi]$, the cross terms vanish in expectation, giving
	\begin{align}
		\mathbb{E}_{\boldsymbol{\omega}_k}\{|g_k|^2\}=N_s.
	\end{align}
	Hence, $\xi=N_s$.
	
	Combining the above results gives
	\begin{align}\label{eq:Je bar 2}
		\bar{\mathbf{J}}_\text{e}
		=
		C_0
		\left(
		2\bar{\Gamma}_0
		+
		\xi\sum_{k=1}^{K}\bar{\Gamma}_k
		\right)
		\mathbf{I}_2,
	\end{align}
	which is \eqref{eq:Asymptotic EFIM}.
\end{IEEEproof}

\begin{theorem}
	Under the assumptions of Proposition~\ref{prop:avg_EFIM}, the PEB
	associated with the averaged EFIM satisfies
	\begin{align}\label{eq:PEB Accuracy Gain}
		\frac{\bar{\mathcal{P}}}
		{\bar{\mathcal{P}}_{\text{Unc.}}}
		=
		\sqrt{
			\frac{2\bar{\Gamma}_0}
			{2\bar{\Gamma}_0
				+\xi\sum_{k=1}^{K}\bar{\Gamma}_k}
		},
	\end{align}
	where $\bar{\mathcal{P}}_{\text{Unc.}}$ corresponds to $K=0$.
\end{theorem}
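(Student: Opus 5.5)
The plan is to combine Proposition~\ref{prop:avg_EFIM} with the definition of the PEB in \eqref{eq:PEB}, applied to the averaged EFIM. We read $\bar{\mathcal{P}}$ as the PEB evaluated at $\bar{\mathbf{J}}_\text{e}$, i.e., $\bar{\mathcal{P}}\triangleq\sqrt{\mathrm{Tr}(\bar{\mathbf{J}}_\text{e}^{-1})}$. This is not an expectation of the instantaneous PEB, and the distinction should be stated explicitly, since by Jensen's inequality the two generally differ.

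\emph{Step 1: invert the averaged EFIM.} By \eqref{eq:Asymptotic EFIM}, $\bar{\mathbf{J}}_\text{e}=\kappa_K\mathbf{I}_2$ with
\begin{align*}
\kappa_K\triangleq C_0\Big(2\bar{\Gamma}_0+\xi\sum_{k=1}^{K}\bar{\Gamma}_k\Big).
\end{align*}
We need $\kappa_K>0$ for the inverse to exist. This holds because $C_0>0$ from \eqref{eq:C0}, $\bar{\Gamma}_0=\mathbb{E}\{|\gamma_0|^2\}>0$ for any finite range, and each $\bar{\Gamma}_k\ge 0$ since $1+\cos(\theta-\varphi_k)\ge0$. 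The inverse is then $\kappa_K^{-1}\mathbf{I}_2$, its trace is $2/\kappa_K$, and so $\bar{\mathcal{P}}=\sqrt{2/\kappa_K}$.

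\emph{Step 2: the uncooperative baseline.} Setting $K=0$ in Proposition~\ref{prop:avg_EFIM} leaves only the monostatic term, so $\kappa_0=2C_0\bar{\Gamma}_0$ and $\bar{\mathcal{P}}_{\text{Unc.}}=\sqrt{2/\kappa_0}$. We must check that $C_0$ and $\bar{\Gamma}_0$ are the same in both configurations. This holds because $C_0$ depends only on $M,N,N_r,B,\mathcal{E},P_\mathrm{T},\sigma_R^2$, and $\bar{\Gamma}_0$ depends only on the target geometry distribution. Both are therefore independent of $K$, given the same total power $P_\mathrm{T}$ and the same isotropic average covariance.

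\emph{Step 3: take the ratio.} Dividing the two expressions gives $\bar{\mathcal{P}}/\bar{\mathcal{P}}_{\text{Unc.}}=\sqrt{\kappa_0/\kappa_K}$. The factor $C_0$ cancels, which yields \eqref{eq:PEB Accuracy Gain}.

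The only delicate point is in Step 2: justifying that the $K=0$ baseline uses the same $C_0$ and $\bar{\Gamma}_0$, i.e., the same power budget and covariance assumption. I would handle this by stating that the comparison is made at fixed $P_\mathrm{T}$ and fixed isotropic $\mathbb{E}\{\mathbf{R_X}\}$, so that only the summation over RIS-assisted paths changes. The rest is routine algebra on a scalar multiple of $\mathbf{I}_2$.
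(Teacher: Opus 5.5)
Your proposal is correct and follows the paper's proof: both write the averaged EFIM from Proposition~\ref{prop:avg_EFIM} as a scalar multiple of $\mathbf{I}_2$, set $K=0$ to obtain $2C_0\bar{\Gamma}_0\mathbf{I}_2$, apply the PEB definition to each, and let $C_0$ cancel in the ratio. Three points you make explicitly are left implicit in the paper and are sound: the positivity check on $\kappa_K$, the $K$-independence of $C_0$ and $\bar{\Gamma}_0$, and the clarification that $\bar{\mathcal{P}}$ is the PEB of the averaged EFIM rather than an averaged PEB.
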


\begin{IEEEproof}
	According to \eqref{eq:Je bar 2}, the averaged EFIM for the case
	without RIS-enabled users is
	\begin{align}
		\bar{\mathbf{J}}_{\text{e,Unc.}}
		=
		2C_0\bar{\Gamma}_0\mathbf{I}_2.
	\end{align}
	Applying the PEB definition in \eqref{eq:PEB} to
	$\bar{\mathbf{J}}_{\text{e,Unc.}}$ and \eqref{eq:Je bar 2} gives
	\eqref{eq:PEB Accuracy Gain}.
\end{IEEEproof}

\begin{corollary}
	\label{cor:PEB_scaling}
	Define the average RIS-assisted information contribution as
	\begin{align}\label{eq:Average RIS Information Contribution}
		\bar{\Gamma}_{\mathrm{RIS},K}
		\triangleq
		\frac{1}{K}\sum_{k=1}^{K}\bar{\Gamma}_k.
	\end{align}
	Then, \eqref{eq:PEB Accuracy Gain} can be equivalently written as
	\begin{align}\label{eq:PEB Accuracy Gain Scaling}
		\frac{\bar{\mathcal{P}}}
		{\bar{\mathcal{P}}_{\text{Unc.}}}
		=
		\sqrt{
			\frac{2\bar{\Gamma}_0}
			{2\bar{\Gamma}_0
				+K\xi\bar{\Gamma}_{\mathrm{RIS},K}}
		}.
	\end{align}
	Suppose that there exist constants
	$0<\Gamma_{\min}\leq\Gamma_{\max}<\infty$, independent of $K$,
	such that
	\begin{align}\label{eq:Average Gain Bounded}
		\Gamma_{\min}
		\leq
		\bar{\Gamma}_{\mathrm{RIS},K}
		\leq
		\Gamma_{\max}
	\end{align}
	for sufficiently large $K$. Then, as $K\xi\rightarrow\infty$,
	\begin{align}\label{eq:PEB Scaling Law}
		\frac{\bar{\mathcal{P}}}
		{\bar{\mathcal{P}}_{\text{Unc.}}}
		=
		\begin{cases}
			\Theta\left(\dfrac{1}{N_s\sqrt{K}}\right),
			& \text{coherent RIS phase alignment},\\[2mm]
			\Theta\left(\dfrac{1}{\sqrt{KN_s}}\right),
			& \text{independent random RIS phases}.
		\end{cases}
	\end{align}
\end{corollary}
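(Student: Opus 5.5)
The plan is to start from the PEB ratio in \eqref{eq:PEB Accuracy Gain} and make only an algebraic substitution and then an asymptotic bound. First, by the definition \eqref{eq:Average RIS Information Contribution}, $\sum_{k=1}^{K}\bar{\Gamma}_k = K\bar{\Gamma}_{\mathrm{RIS},K}$. Substituting this into \eqref{eq:PEB Accuracy Gain} gives \eqref{eq:PEB Accuracy Gain Scaling} immediately. No further assumptions are needed for this step, since $\xi$ is common to all RIS-assisted paths by Proposition~\ref{prop:avg_EFIM}.

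For the scaling law, I would use the bounds \eqref{eq:Average Gain Bounded}, valid for sufficiently large $K$, to sandwich the denominator:
\begin{align*}
2\bar{\Gamma}_0+K\xi\Gamma_{\min}\le 2\bar{\Gamma}_0+K\xi\bar{\Gamma}_{\mathrm{RIS},K}\le 2\bar{\Gamma}_0+K\xi\Gamma_{\max}.
\end{align*}
Note that $\bar{\Gamma}_0=\mathbb{E}\{|\gamma_0|^2\}$ is a fixed positive constant, since it depends only on the direct-path geometry and not on $K$ or $N_s$. Then, as $K\xi\to\infty$, the constant $2\bar{\Gamma}_0$ becomes negligible relative to $K\xi\Gamma_{\min}$. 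Concretely, for $K\xi\ge 2\bar{\Gamma}_0/\Gamma_{\min}$ the denominator lies between $K\xi\Gamma_{\min}$ and $2K\xi\Gamma_{\max}$. Hence
\begin{align*}
\sqrt{\frac{\bar{\Gamma}_0}{K\xi\Gamma_{\max}}}\le\frac{\bar{\mathcal{P}}}{\bar{\mathcal{P}}_{\text{Unc.}}}\le\sqrt{\frac{2\bar{\Gamma}_0}{K\xi\Gamma_{\min}}},
\end{align*}
so the ratio is $\Theta\big((K\xi)^{-1/2}\big)$, with constants depending only on $\bar{\Gamma}_0,\Gamma_{\min},\Gamma_{\max}$.

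Finally, I would substitute the two values of $\xi$ from Proposition~\ref{prop:avg_EFIM}. Coherent alignment gives $\xi=N_s^2$, so $(K\xi)^{-1/2}=1/(N_s\sqrt{K})$. Independent random phases give $\xi=N_s$, so the ratio is $1/\sqrt{KN_s}$.

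The only delicate point is the uniformity of the $\Theta$ constants. The assumption \eqref{eq:Average Gain Bounded} must hold independently of $N_s$ as well as $K$. This holds because $\bar{\Gamma}_k$ is defined through $\widetilde{\gamma}_k=\gamma_k/g_k$, which strips out the RIS array gain. Also, $\bar{\Gamma}_0$ must stay bounded away from $0$ and $\infty$. I would state both facts explicitly so that the limit can be taken jointly in $K$ and $N_s$.
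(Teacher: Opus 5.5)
Your proposal is correct and follows the paper's proof: substitute $\sum_{k=1}^{K}\bar{\Gamma}_k=K\bar{\Gamma}_{\mathrm{RIS},K}$, use the bounds on $\bar{\Gamma}_{\mathrm{RIS},K}$ to get $\Theta\big((K\xi)^{-1/2}\big)$, then insert $\xi=N_s^2$ or $\xi=N_s$. Your explicit sandwich, valid once $K\xi\ge 2\bar{\Gamma}_0/\Gamma_{\min}$, and your remark that $\bar{\Gamma}_0$ and the $\bar{\Gamma}_k$ do not depend on $N_s$ make rigorous what the paper states in one line.
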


\begin{IEEEproof}
	From \eqref{eq:Average RIS Information Contribution},
	\begin{align}
		\sum_{k=1}^{K}\bar{\Gamma}_k
		=
		K\bar{\Gamma}_{\mathrm{RIS},K}.
	\end{align}
	Substituting this relation into \eqref{eq:PEB Accuracy Gain} gives
	\eqref{eq:PEB Accuracy Gain Scaling}. Under
	\eqref{eq:Average Gain Bounded}, the quantity
	$\bar{\Gamma}_{\mathrm{RIS},K}$ remains bounded above and bounded
	away from zero as $K$ increases. Therefore, for
	$K\xi\rightarrow\infty$,
	\begin{align}
		\frac{\bar{\mathcal{P}}}
		{\bar{\mathcal{P}}_{\text{Unc.}}}
		=
		\Theta\left(\frac{1}{\sqrt{K\xi}}\right).
	\end{align}
	Using $\xi=N_s^2$ for coherent RIS phase alignment and
	$\xi=N_s$ for independent random RIS phases gives
	\eqref{eq:PEB Scaling Law}.
\end{IEEEproof}

\begin{remark}
Although \eqref{eq:PEB Scaling Law} characterizes the asymptotic scaling with $K$, the finite-$K$ localization gain is also affected by the relative strength of the direct and RIS-assisted paths. In particular, when the direct path is dominant, as is often the case in practice, the contribution of a small number of RIS-enabled users may remain relatively modest, and the asymptotic $K$-scaling becomes apparent only after the aggregate information provided by the RIS-assisted paths becomes sufficiently significant. Hence, a stronger direct path mainly delays the onset of the asymptotic regime, while it does not change the scaling exponent provided that its average contribution remains finite and independent of $K$.	
	
	The condition in \eqref{eq:Average Gain Bounded} does not require
	identical propagation gains across the RIS-assisted paths. It only
	requires their average information contribution to remain finite and
	non-vanishing as the number of RIS-enabled users increases. Therefore,
	the individual quantities $\bar{\Gamma}_k$ may differ due to different
	BS--RIS and RIS--target distances, loss factors, and bistatic
	geometries. 
\end{remark}
%

\begin{remark}
 Note that the optimal phase adjustment of RIS as \eqref{eq:Optimal Phase RIS Asymptotic Analysis} requires a certain level of BS-Radar cooperation as well as a prior knowledge of target position. As previously mentioned, the corresponding optimal RIS gain is $\xi=N_s^2$.
It may be interesting that when the users are equipped with RIS for another purpose so that the RIS phase profiles are adjusted in an uncooperative manner or even randomly, the proposed scheme is still useful, and provides a gain $\xi=N_s$.
\end{remark}

Another problem to be investigated is the effect of optimal beamforming designed to satisfy both the radar and communication requirements.


\section{Dual-Functional Joint Active and Passive Beamforming}
\label{sec:Dual-Functional Joint Active and Passive Beamforming}

We consider the sensing-centric joint active and passive beamforming
problem
\begin{align}\label{eq:Opt Problem General Form}
&\underset{\mathbf{W},\{\boldsymbol{\omega}_{k}\}_{k=1}^{K}}
{\min}\quad \mathcal{A}(\mathbf{p}) \\
&\text{s.t.}\quad
\mathrm{SINR}_k\geq\varepsilon_k,\quad
k=1,\ldots,\widetilde{K},
\nonumber\\[-1mm]
&\hspace{14mm}\|\mathbf W\|_F^2\leq P_\mathrm{T}, \nonumber
\end{align}
where
\[
\mathcal{A}(\mathbf p)
=
\mathrm{Tr}\!\left(
\mathbb E_{\boldsymbol{\beta}}
\{\mathbf J_\mathrm{e}(\mathbf p)\}^{-1}
\right)
\]
is the A-optimality criterion, $\varepsilon_k$ is the required SINR
of user $k$, and $P_\mathrm{T}$ is the transmit-power budget.

\begin{remark}
The design in \eqref{eq:Opt Problem General Form} requires a nominal target position, which can be supplied in tracking operation by the estimate from the previous coherent processing interval \cite{bar2004estimation}. If unavailable, a worst-case or Bayesian design over the surveillance region can instead be adopted.
\end{remark}

Under the communication model of Section~\ref{sec:System model}, the
SINR and power constraints in \eqref{eq:Opt Problem General Form} are
independent of the RIS phases. Hence, the passive beamforming design
decouples from the active beamforming problem.

\begin{proposition}
\label{prop:RIS_phase}
The optimum phase profile of the $k$th RIS is
\begin{align}
\label{eq:RISPhaseOptimal}
\boldsymbol{\omega}_{k,i}^{*}
=
\frac{\pi}{2}(2i-N_s+1)
(\sin\psi'_k-\sin\varphi'_k),
\quad i=0,\ldots,N_s-1.
\end{align}
\end{proposition}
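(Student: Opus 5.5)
Since the SINR and power constraints in \eqref{eq:Opt Problem General Form} do not depend on $\{\boldsymbol{\omega}_k\}$, I would fix any feasible $\mathbf W$ and minimize $\mathcal{A}(\mathbf p)$ over the RIS phases alone. Under the ranging-dominant approximation \eqref{eq:Expected EFIM}, $\bar{\mathbf J}_\mathrm{e}(\mathbf p)=\sum_{k=0}^K\bar\lambda_k\mathbf J_\mathrm{r}(\theta,\varphi_k,s_k)$. Here the direction matrices $\mathbf J_\mathrm{r}$ depend only on geometry, and $\boldsymbol\omega_k$ enters only through $\bar\lambda_k\propto\ell_k$.

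From \eqref{eq: Channel G_k},
\[
\ell_k=\mathrm{Tr}(\mathbf G_k\mathbf R_X\mathbf G_k^\mathsf H)=|\gamma_k|^2N_r\,\mathbf a^\mathsf H(\varphi_k)\mathbf R_X\mathbf a(\varphi_k)=|\widetilde\gamma_k|^2|g_k|^2N_r\,\mathbf a^\mathsf H(\varphi_k)\mathbf R_X\mathbf a(\varphi_k).
\]
So $\bar\lambda_k=\kappa_k|g_k|^2$, where $\kappa_k\ge0$ does not depend on any RIS phase. Each $\boldsymbol\omega_k$ affects only the $k$th term of the sum.

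The first step is a monotonicity argument. Each $\mathbf J_\mathrm{r}\succeq0$, so increasing any $\bar\lambda_k$ gives $\bar{\mathbf J}_\mathrm{e}'\succeq\bar{\mathbf J}_\mathrm{e}$, hence $\bar{\mathbf J}_\mathrm{e}'^{-1}\preceq\bar{\mathbf J}_\mathrm{e}^{-1}$, and the trace does not increase. This uses the operator-monotonicity of matrix inversion on positive definite matrices; invertibility holds generically since the direct path alone gives a rank-one term and the other paths add more directions. It follows that the phases $\{\boldsymbol\omega_k\}$ can be optimized separately, each by maximizing $|g_k|^2$. If $\kappa_k=0$, any phase is optimal, including \eqref{eq:RISPhaseOptimal}.

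The second step solves $\max_{\boldsymbol\omega_k}|g_k|$. From the ULA steering vectors, the $i$th entry of $\mathbf b(\alpha)$ is $e^{j\pi(i-\frac{N_s-1}{2})\sin\alpha}$. Therefore
\[
g_k=\sum_{i=0}^{N_s-1}e^{j[\omega_{k,i}+\pi(i-\frac{N_s-1}{2})(\sin\varphi'_k-\sin\psi'_k)]}.
\]
The triangle inequality gives $|g_k|\le N_s$, with equality if and only if all exponents are equal modulo $2\pi$. Choosing $\omega_{k,i}=\frac{\pi}{2}(2i-N_s+1)(\sin\psi'_k-\sin\varphi'_k)$ makes every exponent zero, which achieves $|g_k|^2=N_s^2$. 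This is the profile in \eqref{eq:RISPhaseOptimal}, consistent with Case 1 in the proof of Proposition~\ref{prop:avg_EFIM}. The optimum is unique up to a common phase offset, which is irrelevant because $\beta_k$ is a nuisance parameter.

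The main obstacle is the first step: justifying that maximizing each $|g_k|^2$ minimizes the trace of the inverse of a sum of rank-one matrices, including the invertibility caveat. I would state it through the Loewner-order monotonicity of $\mathbf X\mapsto\mathrm{Tr}(\mathbf X^{-1})$. The argument also relies on the ranging-dominant model and on the communication model in which RIS phases do not enter the SINR constraints, both already adopted in the paper.
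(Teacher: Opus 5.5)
Your proposal is correct and follows essentially the same route as the paper. The intensity of the $k$th RIS-assisted path is proportional to $|g_k|^2$, the direction matrices are positive semidefinite, and the SINR/power constraints do not involve the RIS phases, so the objective is minimized by maximizing each $|g_k|^2$; coherent alignment achieves this with $|g_k|^2=N_s^2$, and your Loewner-order monotonicity of $\mathbf{X}\mapsto\mathrm{Tr}(\mathbf{X}^{-1})$ and the triangle-inequality computation of $g_k$ simply make explicit the steps the paper's short proof leaves implicit.
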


\begin{IEEEproof}
From \eqref{eq:lambda_k}, \eqref{eq:a_i def}, and the path-gain model,
the information intensity of the $k$th RIS-assisted path is
proportional to $|g_k|^2$. Since
$\mathbf J_\mathrm{r}(\theta,\varphi_k,s_k)\succeq\mathbf0$ and the
communication constraints are independent of the RIS phases, the
objective is minimized by maximizing each $|g_k|^2$. The coherent
phase alignment in \eqref{eq:Optimal Phase RIS Asymptotic Analysis}
then gives \eqref{eq:RISPhaseOptimal} and $|g_k|^2=N_s^2$.
\end{IEEEproof}

We next optimize the active beamforming for the resulting RIS gains.
To express the ranging-only EFIM in a compact form, define
$\mathbf d_i=[x_i,y_i]^\mathsf{T}$ as
\begin{align}
\mathbf d_0
&=2[\cos\theta,\sin\theta]^\mathsf{T}, \nonumber\\
\mathbf d_i
&=[\cos\theta+\cos\varphi_i,\,
   \sin\theta+\sin\varphi_i]^\mathsf{T},
\quad i=1,\ldots,K,
\label{eq:directionCompact}
\end{align}
and
\begin{align}
\mathbf A_0
&=\mathbf a_\theta\mathbf a_\theta^\mathsf{H},\qquad
\mathbf A_i
=\mathbf a(\varphi_i)\mathbf a(\varphi_i)^\mathsf{H},
\quad i=1,\ldots,K, \nonumber\\
\alpha_0
&=\gamma_0^2c_1\|\mathbf c_\theta\|^2,\qquad
\alpha_i
=|\gamma_i|^2c_1\|\mathbf c_\theta\|^2,
\quad i=1,\ldots,K,
\label{eq:alphaAi}
\end{align}
where
$c_1=MN(N+2)\pi^2B^2\mathcal E/ [3(N+1)c^2\sigma_R^2]$.

Indeed, using \eqref{eq:lambda_k} with $|\beta_i|^2$ replaced by
$\mathcal E$, the quantities $\ell_i$ in \eqref{eq:a_i def} can be
written as
\begin{align}
\ell_0
&=
\gamma_0^2\|\mathbf c_\theta\|^2
\mathrm{Tr}(\mathbf A_0\mathbf R_X), \nonumber\\
\ell_i
&=
|\gamma_i|^2\|\mathbf c_\theta\|^2
\mathrm{Tr}(\mathbf A_i\mathbf R_X),
\quad i=1,\ldots,K.
\label{eq:ellCompact}
\end{align}
Furthermore,
$\mathbf J_\mathrm{r}(\theta,\varphi_i,s_i)
=\mathbf d_i\mathbf d_i^\mathsf{T}$.
Hence, under the ranging-dominant approximation of
\eqref{eq:EFIM Approx Final Form (TOA only)}, the expected EFIM in
\eqref{eq:Expected EFIM} can be written as
\begin{equation}
\label{eq:compactEFIM}
\bar{\mathbf J}_\mathrm{e}(\mathbf p)
=
\sum_{i=0}^{K}
\alpha_i
\mathrm{Tr}(\mathbf A_i\mathbf R_X)
\mathbf d_i\mathbf d_i^\mathsf{T},
\qquad
\mathbf R_X=
\sum_{k=1}^{\widetilde K}
\mathbf w_k\mathbf w_k^\mathsf{H}.
\end{equation}
Since $\bar{\mathbf J}_\mathrm{e}$ is $2\times2$, $\mathcal A(\mathbf p)=\mathrm{Tr}(\bar{\mathbf J}_\mathrm{e})/\det(\bar{\mathbf J}_\mathrm{e})$. Let
\begin{align}
\mathbf r&=\mathrm{vec}(\mathbf R_X),\qquad
\mathbf a_i=\mathrm{vec}(\mathbf A_i), \nonumber\\
\mathbf a
&=
\sum_{i=0}^{K}
\alpha_i(x_i^2+y_i^2)\mathbf a_i, \nonumber\\
\widetilde{\mathbf B}
&=
\sum_{i=0}^{K}\sum_{j=0}^{K}
\alpha_i\alpha_j
(x_i y_j-y_i x_j)^2
\mathbf a_i\mathbf a_j^\mathsf{H}.
\label{eq:compactAB}
\end{align}
Defining
$\chi_i=\mathrm{Tr}(\mathbf A_i\mathbf R_X)$, it follows from
\eqref{eq:compactEFIM} that
\begin{align}
\mathrm{Tr}(\bar{\mathbf J}_\mathrm{e})
&=
\sum_{i=0}^{K}
\alpha_i(x_i^2+y_i^2)\chi_i
=
\frac{1}{2}
\left(
\mathbf r^\mathsf{H}\mathbf a+
\mathbf a^\mathsf{H}\mathbf r
\right), \nonumber\\
\det(\bar{\mathbf J}_\mathrm{e})
&=
\frac{1}{2}
\sum_{i=0}^{K}\sum_{j=0}^{K}
\alpha_i\alpha_j
(x_i y_j-y_i x_j)^2
\chi_i\chi_j
=
\frac{1}{2}
\mathbf r^\mathsf{H}
\widetilde{\mathbf B}\mathbf r .
\label{eq:traceDetCompact}
\end{align}
Hence,
\begin{equation}
\label{eqObjAopt4}
\mathcal A(\mathbf p)
=
\frac{\mathbf r^\mathsf{H}\mathbf a+
      \mathbf a^\mathsf{H}\mathbf r}
     {\mathbf r^\mathsf{H}
      \widetilde{\mathbf B}\mathbf r},
\end{equation}
where the denominator is positive whenever the ranging-only EFIM is
nonsingular.

For the active design, define $\mathbf W_k=\mathbf w_k\mathbf w_k^\mathsf{H}$ and $\mathbf H_k=\mathbf h_k\mathbf h_k^\mathsf{H}$. With these lifted variables, the SINR constraints become affine in
$\{\mathbf W_k\}$. The relation
$\mathbf W_k=\mathbf w_k\mathbf w_k^\mathsf{H}$ is equivalent to $\mathbf W_k\succeq\mathbf 0$ and $\mathrm{rank}(\mathbf W_k)=1$. We relax the rank-one constraints and solve the resulting semidefinite relaxation. If the beamforming vectors are retained explicitly, the equality $\mathbf W_k=\mathbf w_k\mathbf w_k^\mathsf{H}$ can equivalently be relaxed as
\begin{equation}
\mathbf W_k\succeq
\mathbf w_k\mathbf w_k^\mathsf{H}
\quad\Longleftrightarrow\quad
\begin{bmatrix}
\mathbf W_k & \mathbf w_k\\
\mathbf w_k^\mathsf{H} & 1
\end{bmatrix}
\succeq\mathbf 0,
\label{eq:SchurRelax}
\end{equation}
by the Schur complement. Rank-one recovery is applied whenever the
resulting $\mathbf W_k$ is not rank one.

Introducing auxiliary variables $t$ and $\widetilde t$, the relaxed
fractional problem can be written in epigraph form as
\begin{subequations}
\label{eqOptPrecoding2}
\begin{align}
\min_{\mathbf r,\{\mathbf W_k\},t,\widetilde t}&\quad
 t \nonumber\\
\mathrm{s.t.}\quad
&\mathbf r=
\mathrm{vec}\!\left(
\sum_{k=1}^{\widetilde K}\mathbf W_k
\right),
\label{eqOptPrecoding2a}\\
&
\mathrm{Tr}(\mathbf H_k\mathbf W_k)
-\varepsilon_k
\!\!\sum_{\substack{i=1\\i\neq k}}^{\widetilde K}
\mathrm{Tr}(\mathbf H_k\mathbf W_i)
\geq
\varepsilon_k\sigma_C^2,
\quad k\in[\widetilde K],
\label{eqOptPrecoding2b}\\
&
\sum_{k=1}^{\widetilde K}
\mathrm{Tr}(\mathbf W_k)
\leq P_\mathrm{T},
\qquad
\mathbf W_k\succeq\mathbf 0,\quad
k\in[\widetilde K],
\label{eqOptPrecoding2c}\\
&
\begin{bmatrix}
t&1\\
1&\widetilde t
\end{bmatrix}
\succeq\mathbf 0,
\label{eqOptPrecoding2d}\\
&
\widetilde f(\widetilde{\mathbf r})\leq0,
\label{eqConstNonCvx1}
\end{align}
\end{subequations}
where $\widetilde{\mathbf r}=[\mathbf r^\mathsf{T},\widetilde t]^\mathsf{T}$ and
\begin{equation}
\label{eqConstNonCvx}
\widetilde f(\widetilde{\mathbf r})
=
\widetilde t
(\mathbf r^\mathsf{H}\mathbf a+
 \mathbf a^\mathsf{H}\mathbf r)
-
\mathbf r^\mathsf{H}
\widetilde{\mathbf B}\mathbf r .
\end{equation}
All constraints in \eqref{eqOptPrecoding2} are convex except
\eqref{eqConstNonCvx1}. We therefore adopt a local SCA procedure
\cite{sun2016majorization}. At iteration $l$, the non-convex function
is approximated around
$\widetilde{\mathbf r}_l=[\mathbf r_l^\mathsf{T},
\widetilde t_l]^\mathsf{T}$ as
\begin{align}
\widehat f_l(\widetilde{\mathbf r})
&=
\widetilde f(\widetilde{\mathbf r}_l)
+
\Re\!\left\{
\nabla\widetilde f(\widetilde{\mathbf r}_l)^\mathsf{H}
(\widetilde{\mathbf r}-\widetilde{\mathbf r}_l)
\right\},
\label{eq:SCAapprox}\\
\nabla\widetilde f(\widetilde{\mathbf r}_l)
&=
\begin{bmatrix}
2\widetilde t_l\mathbf a
-2\widetilde{\mathbf B}\mathbf r_l\\
\mathbf r_l^\mathsf{H}\mathbf a+
\mathbf a^\mathsf{H}\mathbf r_l
\end{bmatrix}.
\label{eq:SCAgrad}
\end{align}
Replacing \eqref{eqConstNonCvx1} by
$\widehat f_l(\widetilde{\mathbf r})\leq0$ yields the convex
subproblem solved at each iteration.

If the resulting $\mathbf W_k$ are rank one, the beamforming vectors
are obtained directly. Otherwise, rank-one candidates are generated
using principal-eigenvector extraction or Gaussian randomization, and
only candidates satisfying the original SINR and power constraints are
retained. The complete procedure is summarized in
Algorithm~\ref{alg:SCA}.

\begin{algorithm}[t]
\caption{Iterative Joint Active and Passive Beamforming}
\label{alg:SCA}
\begin{algorithmic}[1]
\STATE Set $\{\boldsymbol{\omega}_k\}$ according to
Proposition~\ref{prop:RIS_phase} and form the corresponding
coefficients in \eqref{eq:alphaAi}.
\STATE Initialize a feasible
$\widetilde{\mathbf r}_0=
[\mathbf r_0^\mathsf{T},\widetilde t_0]^\mathsf{T}$
and set $l=0$.
\REPEAT
\STATE Solve \eqref{eqOptPrecoding2} with
\eqref{eqConstNonCvx1} replaced by
$\widehat f_l(\widetilde{\mathbf r})\leq0$.
\STATE Obtain $\widetilde{\mathbf r}_{l+1}$ and
$\{\mathbf W_{k,l+1}\}_{k=1}^{\widetilde K}$, and compute
$\mathcal A_{l+1}$ from \eqref{eqObjAopt4}.
\STATE $l\leftarrow l+1$.
\UNTIL{
$|\mathcal A_l-\mathcal A_{l-1}|/
\mathcal A_{l-1}<\delta$
\textbf{ or } $l>L_{\max}$}
\STATE Recover rank-one beamforming candidates and retain only those
satisfying the original SINR and power constraints.
\RETURN
$\mathbf W=[\mathbf w_1,\ldots,\mathbf w_{\widetilde K}]$
and $\{\boldsymbol{\omega}_k\}_{k=1}^{K}$.
\end{algorithmic}
\end{algorithm}

The above procedure is used as a local  method. Since the
first-order approximation in \eqref{eq:SCAapprox} has not been shown
to globally majorize the non-convex constraint
\eqref{eqConstNonCvx1}, we do not claim guaranteed MM/KKT convergence.
Termination is based on stabilization of the original A-optimal
objective, and the original SINR and power constraints are explicitly
rechecked after rank-one recovery. A convergence-guaranteed SCA
variant is left for future work.


\section{Efficient Estimator Design}\label{sec:Efficient Estimator Design}
In this section, we aim to develop an efficient estimator for the so-called target localization problem.

\subsection{Problem Formulation}

Based on signal model in \eqref{eq:Received Signal Radar}, the target position-related parameters are the TOAs $\tau_k$'s and the AOA $\theta$, which can be extracted from the received signal at the radar receiver using local delay and angle estimation algorithms such as \cite{delay_doppler_estimation,AOA_estimation}.
The localization problem then can be defined as the estimation of target position based on the aforementioned measurements.

The total measurement model can be represented as
\begin{align}\label{eq:total meas. model}
\tilde{\mathbf{m}} = \mathbf{m} + \Delta \mathbf{m}
\end{align}
where $\mathbf{m}=[\boldsymbol{\tau}^\mathsf{T},\theta]^\mathsf{T} \in \mathbb{R}^{K+2}$ denoting the true measurement vector and $\boldsymbol{\tau}=[\tau_0,\tau_1,\ldots,\tau_K]^\mathsf{T}$. The vectors $\tilde{\mathbf{m}}$ and $\Delta \mathbf{m}$ are the corresponding observed version and the measurement noise vector which is assumed to be modeled as a zero-mean Gaussian random vector with covariance matrix $\mathbf{C_m}$.

Considering the best possible accuracy of the measurement estimation algorithms determined by the associated CRB, $\mathbf{C_m}$ can be written, by applying block matrix inversion formula on \eqref{eq:J_eta}, as 
\begin{align}\label{eq:C_m}
\mathbf{C_m} = (\mathbf{\Psi}_1 - \mathbf{\Psi}_2\mathbf{\Psi}_3^{-1}\mathbf{\Psi}_2^\mathsf{T})^{-1}
\end{align}

The noise-free measurement parameters in $\mathbf{m}$ can be defined as
\begin{align}
\tau_0 & = \frac{2}{c} \|\mathbf{p}-\mathbf{q}\| \label{eq:tau_0}\\
\tau_k & = \frac{1}{c} (\|\mathbf{q}-\mathbf{u}_k\|+ \|\mathbf{p}-\mathbf{u}_k\| + \|\mathbf{p}-\mathbf{q}\|),\quad k \ne 0 \label{eq:tau_k}\\
\theta & = \operatorname{atan2}(y_p-y_q,x_p-x_q) \label{eq:theta}
\end{align}

Equations \eqref{eq:tau_0}-\eqref{eq:theta} induce, respectively, a circular, $K$ elliptical and a linear locus on which the target is located.
Due to highly nonlinear and nonconvex nature of the associated maximum likelihood (ML) problem, finding its globally optimal solution is a challenging task.
In the rest of this section, we construct a weighted pseudo-linear approximation of this problem; the resulting QCQP is a generalized trust-region subproblem (GTRS) and can be solved globally for the approximated model.

\begin{remark}
When each delay path is resolved in delay domain, an important problem is to determine which delay measurement is associated with the $k$th user (Since the direct-path delay ($\tau_0$) is the smallest one, it corresponds to the BS already). Such a data association problem is commonplace in multipath-aided positioning and multi-sensor multi-target tracking, and efficient and sophisticated methods exist for it (e.g., refer to \cite{bar2004estimation}).
\end{remark}

\begin{remark}
The opportunistic use of RIS-assisted user measurements effectively provides additional bistatic illumination viewpoints at the users' locations, which can enhance positioning performance. Furthermore, in a range-only formulation that deliberately neglects AOA, the additional bistatic measurements can provide the second independent ranging direction needed for 2-D localization.
\end{remark}

\subsection{Proposed Estimator}\label{subsec:Proposed Estimator}
Rewriting \eqref{eq:tau_0} and \eqref{eq:tau_k} as $r_0=\|\mathbf{p}-\mathbf{q}\|$ and $r_k=\|\mathbf{p}-\mathbf{u}_k\|$, respectively, and squaring both sides, yield
\begin{align}
\|\mathbf{q}\|^2-r_0^2-2\mathbf{q}^\mathsf{T}\mathbf{p} + \|\mathbf{p}\|^2 &= 0 \label{eq:tau_0 equation}\\
\|\mathbf{u}_k\|^2-r_k^2-2\mathbf{u}_k^\mathsf{T}\mathbf{p} + \|\mathbf{p}\|^2 &= 0, \,\, k=1,2,\ldots,K \label{eq:tau_k equation}
\end{align}
where $r_0=\frac{c\tau_0}{2}$ and $r_k=c(\tau_k-\frac{\tau_0}{2})-\|\mathbf{q}-\mathbf{u}_k\|$ for $k\ne 0$.

Equation \eqref{eq:theta} can be recast, by taking tangent and cross-multiplication, as
\begin{align} \label{eq:theta equation}
\boldsymbol{\rho}^\mathsf{T}\mathbf{q} - \boldsymbol{\rho}^\mathsf{T}\mathbf{p} = 0
\end{align}
where $\boldsymbol{\rho}=[-\sin \theta , \cos \theta]^\mathsf{T}$.

Stacking the equations in \eqref{eq:tau_0 equation}, \eqref{eq:tau_k equation} and \eqref{eq:theta equation} gives the following matrix equation:
\begin{align}\label{eq:d-A model}
\mathbf{d}-\mathbf{A}\boldsymbol{\vartheta}=0
\end{align}
where $\boldsymbol{\vartheta}=[\mathbf{p}^\mathsf{T},\|\mathbf{p}\|^2]^\mathsf{T}$ and
\begin{align}
\mathbf{d} & = [\|\mathbf{q}\|^2\!-\!r_0^2,\|\mathbf{u}_1\|^2\!-\!r_1^2,\ldots,\|\mathbf{u}_K\|^2\!-\!r_K^2,\boldsymbol{\rho}^\mathsf{T}\mathbf{q}]^\mathsf{T} \nonumber\\
\mathbf{A} & =
\left[ \begin{matrix}
2\mathbf{S}^\mathsf{T} & -\mathbf{1}_{K+1}  \\
\boldsymbol{\rho}^\mathsf{T} & 0  \\
\end{matrix} \right], \quad \mathbf{S} = [\mathbf{q},\mathbf{u}_1,\ldots,\mathbf{u}_K]
\end{align}

By considering the measurement noise, the matrix equation in \eqref{eq:d-A model} can be approximated up to linear noise terms as
\begin{align}\label{eq:d-A noisy model}
\tilde{\mathbf{d}}-\tilde{\mathbf{A}}\boldsymbol{\vartheta}=\mathbf{B}\Delta \mathbf{m}
\end{align}
where $\tilde{\mathbf{d}}$ and $\tilde{\mathbf{A}}$ are noisy versions of $\mathbf{d}$ and $\mathbf{A}$ obtained by replacing the true terms with their measured counterparts, and
\begin{align}
\mathbf{B} = \left[ \begin{matrix}
-c\,r_0 & \mathbf{0}_K^\mathsf{T}  & 0\\
c\,\mathbf{r} & -2c\,\text{diag}(\mathbf{r})  & \mathbf{0}_K\\
0 & \mathbf{0}_K^\mathsf{T} & r_0
\end{matrix} \right], \mathbf{r} = [r_1,r_2,\ldots,r_K]^\mathsf{T}
\end{align}

Based on the first-order pseudo-linear model in \eqref{eq:d-A noisy model}, we use the following weighted least-squares approximation:
\begin{align}\label{eq:CWLS initial}
& \underset{\boldsymbol{\vartheta}}{\mathop{\min }}\,{{\left( \mathbf{\tilde{d}-\tilde{A}}\boldsymbol{\vartheta} \right)}^{T}}\mathbf{\Gamma }\left( \mathbf{\tilde{d}-\tilde{A}}\boldsymbol{\vartheta} \right) \nonumber\\ 
& \text{s}\text{.t}\text{. }\boldsymbol{\vartheta}(3)={{\boldsymbol{\vartheta}}^{2}}(1)+{{\boldsymbol{\vartheta}}^{2}}(2)
\end{align}
where $\mathbf{\Gamma}=(\mathbf{B}\mathbf{C}_{\mathbf{m}}\mathbf{B}^\mathsf{T})^{-1}$. Since $\mathbf B$ contains the unknown ranges, it is formed in practice from the measured/coarse range estimates; this plug-in weighting converges to the ideal weighting in the small-noise regime. The problem can then be recast as
\begin{align}\label{eq:CWLS GTRS}
& \underset{\boldsymbol{\vartheta}}{\mathop{\min }}\,   \boldsymbol{\vartheta}^\mathsf{T}\mathbf{\tilde{D}}\boldsymbol{\vartheta} - 2\mathbf{\tilde{f}}^\mathsf{T}\boldsymbol{\vartheta} \nonumber\\
& \text{s}\text{.t}\text{. }
\boldsymbol{\vartheta}^\mathsf{T}\mathbf{E}\boldsymbol{\vartheta} + 2 \mathbf{g}^\mathsf{T} \boldsymbol{\vartheta} = 0
\end{align}
where $\mathbf{\tilde{D}} = \mathbf{\tilde{A}}^\mathsf{T}\mathbf{\Gamma}\mathbf{\tilde{A}}$, $\mathbf{\tilde{f}}=\mathbf{\tilde{A}}^\mathsf{T}\mathbf{\Gamma}\mathbf{\tilde{d}}$, $\mathbf{E}=\text{diag}([\mathbf{1}_2^\mathsf{T},0])$ and $\mathbf{g}=[\mathbf{0}_2^\mathsf{T},-0.5]^\mathsf{T}$.

The optimization problem in \eqref{eq:CWLS GTRS} is a QCQP with one quadratic constraint and belongs to the class of problems referred to as generalized trust region subproblems (GTRS) \cite{mor1993generalizations}, which possesses necessary and sufficient optimality conditions, and thus its globally optimal solution can be found through efficient algorithms.
        
\begin{proposition}
The global minimizer of \eqref{eq:CWLS GTRS},     $\boldsymbol{\vartheta}^{\ast}$ can be found as
\begin{equation} \label{eq: vartheta optimal}
{\boldsymbol\vartheta ^*}(\lambda) = {\left( {{\bf{\tilde{D}}} + \lambda {\bf{E}}} \right)^{ - 1}}\left( {{\bf{\tilde{f}}} - \lambda \mathbf{g}} \right)
\end{equation}
such that
\begin{equation} \label{eq: PSD}
{\bf{\tilde{D}}} + \lambda {\bf{E}} \succeq 0
\end{equation}
where $\lambda\in\mathbb{R}$ is the root of the following constraint equation in the interval satisfying \eqref{eq: PSD}:
\begin{equation} \label{eq: P(lambda)}
\mathcal{P}(\lambda)=\boldsymbol{\vartheta}^*(\lambda)^\mathsf{T}\mathbf{E}\boldsymbol{\vartheta}^*(\lambda) + 2 \mathbf{g}^\mathsf{T} \boldsymbol{\vartheta}^*(\lambda).
\end{equation}
\end{proposition}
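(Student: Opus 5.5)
The plan is to apply the optimality theory of the generalized trust-region subproblem of Moré \cite{mor1993generalizations}. For a QCQP with a single quadratic equality constraint, that theory gives necessary and sufficient global optimality conditions, provided the constraint matrix is nonzero and the constraint is regular (some feasible point where its gradient does not vanish).

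The first step is to verify these hypotheses for \eqref{eq:CWLS GTRS}.
\begin{itemize}
\item $\mathbf E=\mathrm{diag}([\mathbf 1_2^\mathsf{T},0])\neq\mathbf 0$.
\item The constraint gradient is $2(\mathbf E\boldsymbol\vartheta+\mathbf g)$. Its third entry equals $-1$ identically, so it never vanishes and the constraint is regular everywhere.
\item The feasible set $\{\boldsymbol\vartheta(3)=\boldsymbol\vartheta^2(1)+\boldsymbol\vartheta^2(2)\}$ is nonempty, since it is a paraboloid.
\item $\mathbf E$ is indefinite in the weak sense required (it takes both zero and positive values). Together with the fact that $\tilde{\mathbf D}=\tilde{\mathbf A}^\mathsf{T}\boldsymbol\Gamma\tilde{\mathbf A}\succeq\mathbf0$ with $\boldsymbol\Gamma\succ\mathbf0$, this guarantees that the set $I=\{\lambda:\tilde{\mathbf D}+\lambda\mathbf E\succeq\mathbf0\}$ is a nonempty interval. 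It contains $\lambda=0$, and it is unbounded above because $\mathbf E\succeq\mathbf 0$.
\end{itemize}

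The second step is to invoke Moré's theorem. A feasible $\boldsymbol\vartheta^*$ is a global minimizer if and only if there exists $\lambda\in\mathbb R$ satisfying two conditions.
\begin{itemize}
\item Stationarity of the Lagrangian $\boldsymbol\vartheta^\mathsf{T}\tilde{\mathbf D}\boldsymbol\vartheta-2\tilde{\mathbf f}^\mathsf{T}\boldsymbol\vartheta+\lambda(\boldsymbol\vartheta^\mathsf{T}\mathbf E\boldsymbol\vartheta+2\mathbf g^\mathsf{T}\boldsymbol\vartheta)$, namely $(\tilde{\mathbf D}+\lambda\mathbf E)\boldsymbol\vartheta^*=\tilde{\mathbf f}-\lambda\mathbf g$.
\item The second-order condition \eqref{eq: PSD}.
\end{itemize}
I would sketch the sufficiency direction directly, since it is short. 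For any feasible $\boldsymbol\vartheta$, the objective equals the Lagrangian at $\lambda$, which is a convex quadratic under \eqref{eq: PSD}. That quadratic is minimized at any stationary point, so feasibility of $\boldsymbol\vartheta^*$ yields global optimality. Necessity is quoted from \cite{mor1993generalizations}.

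The third step is to solve the stationarity equation for $\boldsymbol\vartheta^*(\lambda)$, which requires $\tilde{\mathbf D}+\lambda\mathbf E$ to be nonsingular, i.e., $\lambda$ in the interior of $I$. Under the (generic, noisy) assumption that $\tilde{\mathbf A}$ has full column rank, $\tilde{\mathbf D}\succ\mathbf0$. Hence $\tilde{\mathbf D}+\lambda\mathbf E\succ\mathbf0$ for all $\lambda>\lambda_{\min}$, where $\lambda_{\min}<0$ is the largest root of $\det(\tilde{\mathbf D}+\lambda\mathbf E)=0$. This gives \eqref{eq: vartheta optimal}. Substituting into the constraint gives the scalar equation $\mathcal P(\lambda)=0$ in \eqref{eq: P(lambda)}.

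The main obstacle is showing that a root exists in the open interval $(\lambda_{\min},\infty)$ and is unique. For this I would follow Moré's monotonicity argument: a direct computation gives $\mathcal P'(\lambda)=-2\,\mathbf y^\mathsf{T}(\tilde{\mathbf D}+\lambda\mathbf E)^{-1}\mathbf y\le0$, where $\mathbf y=\mathbf E\boldsymbol\vartheta^*+\mathbf g$. So $\mathcal P$ is strictly decreasing, and since $\mathbf y$ has third entry $-1/2$, strictly so. Existence then needs suitable limits at the endpoints.
\begin{itemize}
\item As $\lambda\to\infty$, the first two components of $\boldsymbol\vartheta^*$ tend to zero while the third grows like $\lambda/2$ divided by a positive constant. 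Hence $\mathcal P\to-\infty$.
\item As $\lambda\downarrow\lambda_{\min}$, $\mathcal P$ blows up to $+\infty$ unless the hard case occurs.
\end{itemize}
In the hard case the optimum sits at $\lambda_{\min}$ with a pseudoinverse solution plus a null-space component. I would note this as a measure-zero degenerate situation for noisy measurements, and handle it by the standard GTRS hard-case construction rather than dwell on it.
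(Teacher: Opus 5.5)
Your proposal is correct and takes the same route as the paper. The paper invokes the KKT-based global optimality characterization of Moré's Theorem~3.2, and in the text after the proposition it cites the strict decrease of $\mathcal P$ from Moré's Theorem~5.2; your derivative formula, endpoint limits and hard-case caveat supply the existence of the root, which the paper asserts from monotonicity alone. One small slip: as $\lambda\to\infty$ the first two components of $\boldsymbol\vartheta^*(\lambda)$ converge to $-[\tilde{\mathbf D}]_{1:2,3}/(2[\tilde{\mathbf D}]_{3,3})$ rather than to zero, but they stay bounded while the third grows like $\lambda/(2[\tilde{\mathbf D}]_{3,3})$, so $\mathcal P\to-\infty$ still holds.
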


\begin{IEEEproof} The proof follows the Karush Kuhn Tucker (KKT) conditions for the GTRS problem in \eqref{eq:CWLS GTRS}. See \cite[Theorem 3.2]{mor1993generalizations} for details.
\end{IEEEproof}

Note that only one of the candidate $\lambda$'s is associated with the global solution ${\boldsymbol\vartheta^*}$, and finding this specific $\lambda$ via root finding of $\mathcal{P}(\lambda)$ is not an easy task since it involves taking the inverse of a matrix, parametrically. However, findings of \cite[Theorem 5.2]{mor1993generalizations} suggest that the task of obtaining the optimal $\lambda$ can be accomplished easily as explained ahead.

Condition \eqref{eq: PSD} specifies an interval for which ${\bf{\tilde{D}}} + \lambda {\bf{E}}$ is positive semidefinite. It can be easily verified that this condition is equivalent to
\begin{equation} \label{eq: Equival. condit.}
\lambda  \ge \tilde \lambda  =  - 1/\max ({\rm{eig}}({\bf{E}},{\bf{\tilde{D}}}))
\end{equation} 
where ${\rm{eig}}({\bf{E}},{\bf{\tilde{D}}})$ is the set of generalized eigenvalues of the matrix pair $({\bf{E}},{\bf{\tilde{D}}})$ \cite{beck2008exact}. It is shown by \cite[Theorem 5.2]{mor1993generalizations} that for $\lambda  \ge \tilde \lambda $, $\mathcal{P}(\lambda )$ is a strictly decreasing function, which immediately implies that $\mathcal{P}(\lambda )$ has exactly one root in the interval $[\tilde \lambda ,\infty )$ that can be easily found using MATLAB fzero routine, or even a simpler method such as the bisection-algorithm.

After finding the optimal $\lambda$, the corresponding ${\boldsymbol\vartheta ^*}$ can be obtained from \eqref{eq: vartheta optimal}, from which the final position estimate is obtained as $\hat{\mathbf{p}}= {\boldsymbol\vartheta ^*}(1:2)$.

\begin{remark}
The users' positions may be subject to uncertainties due to previous estimation errors. Thus, considering the statistics of such an uncertainty, one can simply reformulate the localization problem which will be in the same form as \eqref{eq:CWLS initial} by replacing $\mathbf{\Gamma}$ with another weighting matrix accounting for both the measurement noise and user location uncertainty distributions.
Further analysis of this issue is omitted for brevity.
\end{remark}

\subsection{Performance Analysis}\label{subsec:Performance Analysis}
In the following, we aim to establish efficiency of the proposed estimator under mild noise conditions by comparing the estimated target covariance matrix with the CRB.

The estimator in \eqref{eq:CWLS GTRS} is a constrained weighted least-squares problem obtained from a first-order pseudo-linearization. We therefore characterize its covariance in the small-noise regime using the constrained CRB associated with the linearized model, rather than claiming exact maximum-likelihood equivalence for the original nonlinear measurements.

\begin{definition}
$\mathbf{U}$ is defined as a matrix whose columns form an orthonormal basis for the null space of the full row rank matrix ${\bf{F}}(\boldsymbol\vartheta)=\boldsymbol{\vartheta}^\mathsf{T}\mathbf{E} + \mathbf{g}^\mathsf{T}$. In other words,
\begin{equation} \label{eq: conditions for U}
\mathbf{F} (\boldsymbol\vartheta ){\bf{U}} = {\bf{0}}_2^\mathsf{T} ,\,\, {{\bf{U}}^\mathsf{T}}{\bf{U}} = {{\mathbf{I}}_2}.
\end{equation} 
\end{definition}

Using these preliminaries, we can derive the covariance matrix of the proposed estimator:

\begin{proposition}
\label{prop:CCRB}
Under the small-noise conditions C1--C2, the proposed constrained weighted least-squares estimator is asymptotically unbiased to first order, and its covariance approaches:
\begin{equation} \label{eq : CCRB stoica formula}
\boldsymbol\Pi (\boldsymbol\vartheta) = {\bf{U}}{\left( {{{\bf{U}}^\mathsf{T}}{\bf{JU}}} \right)^{ - 1}}{{\bf{U}}^\mathsf{T}}
\end{equation}
where $\mathbf{U}$ is defined in \eqref{eq: conditions for U} and $\mathbf{J}$ is the FIM associated with the linear observation model \eqref{eq:d-A noisy model} which can be easily obtained as \cite{kay}
\begin{align} \label{eq: formula for J}
{\bf{J}} & \approx \mathbf{A}^\mathsf{T}(\mathbf{BC_mB}^\mathsf{T})^{-1}\mathbf{A}
\end{align}
where approximation is valid under small noise conditions:
C1) $|\Delta\tau_k|\ll\tau_k, \forall k$ and C2) $|\Delta\theta|\approx 0$.

Note that $\boldsymbol\Pi (\boldsymbol\vartheta)$ is known as the constrained CRB for the any constrained estimators of the form \eqref{eq:CWLS GTRS} \cite{Stoica_CCRB_1998}.
\end{proposition}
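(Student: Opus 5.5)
The plan is a standard first-order perturbation analysis of the constrained weighted least-squares problem \eqref{eq:CWLS initial} around the true parameter $\boldsymbol\vartheta$, followed by an identification of the resulting covariance with the constrained CRB formula of \cite{Stoica_CCRB_1998}.

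\textbf{Step 1: linearize the data model.} Write $\tilde{\mathbf d}=\mathbf d+\Delta\mathbf d$ and $\tilde{\mathbf A}=\mathbf A+\Delta\mathbf A$. Under C1--C2, keep only terms linear in $\Delta\mathbf m$. This gives \eqref{eq:d-A noisy model}, i.e. $\boldsymbol\epsilon\triangleq\tilde{\mathbf d}-\tilde{\mathbf A}\boldsymbol\vartheta\approx\mathbf B\Delta\mathbf m$. Since $\Delta\mathbf m$ is zero-mean Gaussian, $\boldsymbol\epsilon$ is approximately $\mathcal N(\mathbf 0,\mathbf B\mathbf C_{\mathbf m}\mathbf B^\mathsf{T})$. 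The linear Gaussian model $\tilde{\mathbf d}\approx\mathbf A\boldsymbol\vartheta+\boldsymbol\epsilon$ then has FIM $\mathbf J=\mathbf A^\mathsf{T}(\mathbf B\mathbf C_{\mathbf m}\mathbf B^\mathsf{T})^{-1}\mathbf A$ \cite{kay}, which is \eqref{eq: formula for J}. The replacement of $\tilde{\mathbf A}$ by $\mathbf A$ and of the plug-in $\mathbf\Gamma$ by the ideal weight only changes second-order terms, so I will argue these are negligible in the small-noise regime.

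\textbf{Step 2: perturb the KKT conditions.} The estimate satisfies
\[
(\tilde{\mathbf D}+\lambda\mathbf E)\hat{\boldsymbol\vartheta}=\tilde{\mathbf f}-\lambda\mathbf g,\qquad \mathbf F(\hat{\boldsymbol\vartheta})\hat{\boldsymbol\vartheta}+\mathbf g^\mathsf{T}\hat{\boldsymbol\vartheta}=0 .
\]
Set $\hat{\boldsymbol\vartheta}=\boldsymbol\vartheta+\Delta\boldsymbol\vartheta$. In the noise-free case the unconstrained minimizer is already the true $\boldsymbol\vartheta$, because $\mathbf d=\mathbf A\boldsymbol\vartheta$ and $\mathbf A$ has full column rank when geometry permits. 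The constraint is then inactive, so $\lambda=O(\|\Delta\mathbf m\|)$ and $\lambda\mathbf E\Delta\boldsymbol\vartheta$ is second order. Linearizing the constraint gives $\mathbf F(\boldsymbol\vartheta)\Delta\boldsymbol\vartheta\approx0$, hence $\Delta\boldsymbol\vartheta=\mathbf U\mathbf z$ for some $\mathbf z\in\mathbb R^2$. Substituting into the stationarity condition gives, to first order,
\[
\mathbf J\,\mathbf U\mathbf z+\lambda\,\mathbf F(\boldsymbol\vartheta)^\mathsf{T}\approx\mathbf A^\mathsf{T}\mathbf\Gamma\boldsymbol\epsilon ,
\]
using $\mathbf E\boldsymbol\vartheta+\mathbf g=\mathbf F^\mathsf{T}$. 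Premultiplying by $\mathbf U^\mathsf{T}$ eliminates the multiplier term, since $\mathbf U^\mathsf{T}\mathbf F^\mathsf{T}=\mathbf 0$. Therefore
\[
\Delta\boldsymbol\vartheta\approx\mathbf U(\mathbf U^\mathsf{T}\mathbf J\mathbf U)^{-1}\mathbf U^\mathsf{T}\mathbf A^\mathsf{T}\mathbf\Gamma\boldsymbol\epsilon .
\]

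\textbf{Step 3: bias and covariance.} This expression is linear in a zero-mean $\boldsymbol\epsilon$, so the estimator is first-order unbiased. Its covariance is
\[
\mathbf U(\mathbf U^\mathsf{T}\mathbf J\mathbf U)^{-1}\mathbf U^\mathsf{T}\mathbf A^\mathsf{T}\mathbf\Gamma\,\mathbb E\{\boldsymbol\epsilon\boldsymbol\epsilon^\mathsf{T}\}\,\mathbf\Gamma\mathbf A\,\mathbf U(\cdot)^{-1}\mathbf U^\mathsf{T} .
\]
Because $\mathbf\Gamma\,\mathbb E\{\boldsymbol\epsilon\boldsymbol\epsilon^\mathsf{T}\}\,\mathbf\Gamma=\mathbf\Gamma$ and $\mathbf A^\mathsf{T}\mathbf\Gamma\mathbf A=\mathbf J$, the middle factor collapses to $\mathbf U^\mathsf{T}\mathbf J\mathbf U$. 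This yields $\boldsymbol\Pi(\boldsymbol\vartheta)$ in \eqref{eq : CCRB stoica formula}. That $\boldsymbol\Pi$ is exactly the constrained CRB for the equality constraint with gradient $\mathbf F$ follows directly from \cite{Stoica_CCRB_1998}.

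\textbf{Main obstacle.} The delicate step is justifying that the multiplier $\lambda$ and the perturbations $\Delta\mathbf A$, $\Delta\mathbf\Gamma$ contribute only at second order. I would first bound $\lambda$ via the monotonicity of $\mathcal P(\lambda)$ on $[\tilde\lambda,\infty)$: its root tends to $0$ as $\Delta\mathbf m\to\mathbf 0$. I would also need invertibility of $\mathbf U^\mathsf{T}\mathbf J\mathbf U$, which I expect to require a nondegenerate geometry so that the AoA line and the range loci are not all tangent at $\mathbf p$.
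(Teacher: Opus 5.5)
Your argument is correct, and it takes a genuinely different, more detailed route than the paper.

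\textbf{The paper's route.} The proof is a two-line appeal. Under C1--C2 the pseudo-linear model is first-order accurate with noise covariance $\mathbf B\mathbf C_{\mathbf m}\mathbf B^\mathsf{T}$, and the standard constrained-CRB expression of \cite{Stoica_CCRB_1998,Moore_2008} applied to this local Gaussian model gives $\boldsymbol\Pi(\boldsymbol\vartheta)$. That the CWLS/GTRS estimator is first-order unbiased and actually attains this bound is left implicit, inherited from the literature.

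\textbf{Your route.} You prove attainment directly. You perturb the KKT system around the truth and use that the noise-free multiplier vanishes. Projecting with $\mathbf U^\mathsf{T}$ then removes $\lambda\mathbf F^\mathsf{T}$, leaving an explicit linear error map. Its covariance collapses to $\boldsymbol\Pi$ exactly because $\mathbf\Gamma=(\mathbf B\mathbf C_{\mathbf m}\mathbf B^\mathsf{T})^{-1}$; a mismatched weight would give a sandwich form instead. This gives a real justification of the estimator-specific claims and exposes the regularity conditions needed. The paper's route buys only brevity.

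\textbf{Minor points.}
\begin{itemize}
\item In Step 1, the intermediate statement $\tilde{\mathbf d}\approx\mathbf A\boldsymbol\vartheta+\boldsymbol\epsilon$ drops a first-order term, $\Delta\mathbf A\,\boldsymbol\vartheta$: the entry $\Delta\boldsymbol\rho^\mathsf{T}\mathbf p$ in the AoA row. This is harmless, because Step 2 correctly uses $\boldsymbol\epsilon=\tilde{\mathbf d}-\tilde{\mathbf A}\boldsymbol\vartheta$. There, $\tilde{\mathbf A}$ and the plug-in weight only multiply first-order quantities.
\item The multiplier bound is milder than you fear. At zero noise, $0$ lies in the interior of $[\tilde\lambda,\infty)$ when $\mathbf J\succ\mathbf 0$, and $\mathcal P'(0)=-2\mathbf F\mathbf J^{-1}\mathbf F^\mathsf{T}<0$. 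The implicit function theorem then gives $\lambda=O(\|\Delta\mathbf m\|)$.
\item Invertibility of $\mathbf U^\mathsf{T}\mathbf J\mathbf U$ needs no extra geometric assumption. With $\mathbf V=[\mathbf I_2,2\mathbf p]^\mathsf{T}$ one has $\mathbf V^\mathsf{T}\mathbf J\mathbf V=\boldsymbol\Xi\mathbf C_{\mathbf m}^{-1}\boldsymbol\Xi^\mathsf{T}$, and $\boldsymbol\Xi$ contains the orthogonal directions $\mathbf e_0$ and $\boldsymbol\rho$. Only full column rank of $\mathbf A$, which you use for uniqueness of the noise-free minimizer, is a genuine (generic) restriction.
\end{itemize}
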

 
\begin{IEEEproof}
Under C1--C2, the pseudo-linear model in \eqref{eq:d-A noisy model} is accurate to first order and has covariance $\mathbf B\mathbf C_m\mathbf B^\mathsf T$. Applying the standard constrained-CRB expression \cite{Stoica_CCRB_1998,Moore_2008} to this local Gaussian model yields \eqref{eq : CCRB stoica formula}.
\end{IEEEproof}
 
Based on Proposition~\ref{prop:CCRB}, the covariance matrix of $\mathbf{\hat p}$ is simply obtained as
\begin{equation} \label{eq : matrix C definition}
\text{cov}({\hat{\bf{p}}}) = \boldsymbol\Pi(\boldsymbol{\vartheta})(1\!:\!2,1\!:\!2).
\end{equation}

Now, we aim to prove that the proposed estimator can achieve the target position CRB.

\begin{theorem}
Assume C1--C2 and that the first-stage ToA/AoA estimator is asymptotically efficient so that $\mathbf C_m$ approaches \eqref{eq:C_m}. Then, to first order in the measurement errors, the covariance matrix of the proposed estimator in \eqref{eq : matrix C definition} approaches the position CRB in \eqref{eq:CRB(p)}.
\end{theorem}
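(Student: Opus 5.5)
The plan is to reduce both sides to the same $2\times 2$ matrix $\big(\mathbf{\Xi}\,\mathbf{C_m}^{-1}\mathbf{\Xi}^\mathsf{T}\big)^{-1}$. With $\mathbf{C_m}$ equal to \eqref{eq:C_m}, we have $\mathbf{C_m}^{-1}=\mathbf{\Psi}_1-\mathbf{\Psi}_2\mathbf{\Psi}_3^{-1}\mathbf{\Psi}_2^\mathsf{T}$, so \eqref{eq:CRB(p)} reads $\mathrm{CRB}(\mathbf p)=(\mathbf{\Xi}\mathbf{C_m}^{-1}\mathbf{\Xi}^\mathsf{T})^{-1}$. It therefore suffices to show that the upper-left $2\times2$ block of $\boldsymbol\Pi(\boldsymbol\vartheta)$ in \eqref{eq : CCRB stoica formula} equals this matrix when $\mathbf J$ is given by \eqref{eq: formula for J}.

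\textbf{Step 1: an explicit null-space basis.} Introduce the parametrization $\boldsymbol\vartheta=\mathbf h(\mathbf p)=[\mathbf p^\mathsf{T},\|\mathbf p\|^2]^\mathsf{T}$ and its Jacobian
\begin{equation*}
\mathbf H=\frac{\partial\mathbf h}{\partial\mathbf p^\mathsf{T}}=\begin{bmatrix}\mathbf I_2\\ 2\mathbf p^\mathsf{T}\end{bmatrix}.
\end{equation*}
Since $\mathbf F(\boldsymbol\vartheta)=[x_p,\,y_p,\,-\tfrac12]$, a direct check gives $\mathbf F\mathbf H=\mathbf 0$, and $\mathbf H$ has full column rank $2$. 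Hence we may take $\mathbf U=\mathbf H(\mathbf H^\mathsf{T}\mathbf H)^{-1/2}$, and the normalization cancels in \eqref{eq : CCRB stoica formula}:
\begin{equation*}
\boldsymbol\Pi=\mathbf H(\mathbf H^\mathsf{T}\mathbf J\mathbf H)^{-1}\mathbf H^\mathsf{T}.
\end{equation*}
Because the top $2\times2$ block of $\mathbf H$ is $\mathbf I_2$, \eqref{eq : matrix C definition} gives $\mathrm{cov}(\hat{\mathbf p})=(\mathbf H^\mathsf{T}\mathbf J\mathbf H)^{-1}$.

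\textbf{Step 2: a Jacobian identity linking $\mathbf A$, $\mathbf B$ and $\mathbf{\Xi}$.} Define $\mathbf f(\mathbf m,\boldsymbol\vartheta)=\mathbf d(\mathbf m)-\mathbf A(\mathbf m)\boldsymbol\vartheta$. By construction of \eqref{eq:d-A noisy model}, $\mathbf B=\partial\mathbf f/\partial\mathbf m^\mathsf{T}$ evaluated at the true values. Also $\mathbf f(\mathbf m(\mathbf p),\mathbf h(\mathbf p))\equiv\mathbf 0$ identically in $\mathbf p$, since \eqref{eq:d-A model} holds for every target position. Differentiating this identity with respect to $\mathbf p$ gives
\begin{equation*}
\mathbf B\,\frac{\partial\mathbf m}{\partial\mathbf p^\mathsf{T}}-\mathbf A\mathbf H=\mathbf 0 .
\end{equation*}
From \eqref{eq:Jacobian}--\eqref{eq:Xi}, $\partial\mathbf m/\partial\mathbf p^\mathsf{T}=\mathbf{\Xi}^\mathsf{T}$, so $\mathbf A\mathbf H=\mathbf B\mathbf{\Xi}^\mathsf{T}$. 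Substituting into \eqref{eq: formula for J} yields
\begin{equation*}
\mathbf H^\mathsf{T}\mathbf J\mathbf H=\mathbf{\Xi}\mathbf B^\mathsf{T}(\mathbf B\mathbf{C_m}\mathbf B^\mathsf{T})^{-1}\mathbf B\mathbf{\Xi}^\mathsf{T}.
\end{equation*}
$\mathbf B$ is square of size $(K+2)\times(K+2)$ and block lower-triangular, with diagonal entries $-cr_0$, $-2cr_k$ and $r_0$. It is therefore invertible whenever the target does not coincide with the BS or any RIS-equipped user. In that case $\mathbf B^\mathsf{T}(\mathbf B\mathbf{C_m}\mathbf B^\mathsf{T})^{-1}\mathbf B=\mathbf{C_m}^{-1}$, and hence $\mathrm{cov}(\hat{\mathbf p})=(\mathbf{\Xi}\mathbf{C_m}^{-1}\mathbf{\Xi}^\mathsf{T})^{-1}=\mathrm{CRB}(\mathbf p)$.

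\textbf{Step 3: first-order justification, and the main obstacle.} The remaining task is to justify that the quantities actually used are the true ones to first order. First, the plug-in $\mathbf B$ and the noisy $\tilde{\mathbf A}$ differ from their true values by $O(\Delta\mathbf m)$, which changes the covariance only at higher order under C1--C2. Second, $\mathbf C_m$ tends to \eqref{eq:C_m} by the efficiency assumption on the first-stage ToA/AoA estimator.

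I expect the main obstacle to be verifying that the $\mathbf B$ displayed in the paper is exactly $\partial\mathbf f/\partial\mathbf m^\mathsf{T}$. This means checking the signs and the factors of $c$ coming from $r_0=c\tau_0/2$ and $r_k=c(\tau_k-\tau_0/2)-d_k$. It also means checking that the AoA row, which is linear in $\boldsymbol\rho(\theta)$, produces the entry $r_0$ after using $\partial\boldsymbol\rho/\partial\theta^\mathsf{T}(\mathbf q-\mathbf p)=r_0$. I would carry out this check row by row. Should any constant factor differ from the displayed $\mathbf B$, the argument is unaffected as long as the corrected $\mathbf B$ remains the true Jacobian and invertible, since $\mathbf B$ cancels in Step 2.
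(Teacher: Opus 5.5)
Your proposal is correct and is essentially the paper's proof. It uses the same basis $\mathbf V=[\mathbf I_2,2\mathbf p]^\mathsf{T}$ (your $\mathbf H$), the same reduction $\mathrm{cov}(\hat{\mathbf p})=(\mathbf V^\mathsf{T}\mathbf J\mathbf V)^{-1}$, and the same key identity $\mathbf B^{-1}\mathbf A\mathbf V=\mathbf{\Xi}^\mathsf{T}$. The paper asserts that identity via ``straightforward matrix multiplications,'' whereas you obtain it by differentiating $\mathbf d(\mathbf m(\mathbf p))-\mathbf A(\mathbf m(\mathbf p))\mathbf h(\mathbf p)\equiv\mathbf 0$. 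The row-by-row check you flagged does go through: the displayed $\mathbf B$ is exactly $\partial\mathbf f/\partial\mathbf m^\mathsf{T}$, with entries $-cr_0$, $cr_k$, $-2cr_k$ and $r_0$, and it is lower triangular, hence invertible whenever $r_0,r_k\neq0$.
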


\begin{IEEEproof} 
Let us first rewrite the covariance matrix in a more compact form.
As stated in \cite{Moore_2008}, we may replace ${\bf{U}}$ with any ${\bf{V}}$ in \eqref{eq : CCRB stoica formula} where $\mathbf{F} (\boldsymbol\vartheta ){\bf{U}} = {\bf{0}}_2^\mathsf{T}$ and $\text{rank}({\mathbf{V}}) = \text{rank}({\mathbf{U}}) = 2$. One viable choice for $\bf{V}$ that satisfies both of these conditions is
\begin{equation} \label{eq : Viable chioce for V}
{\bf{V}} = \big[{{\bf{I}}_2},2\mathbf{p}\big]^\mathsf{T}
\end{equation}
Consequently, inserting the above value of $\mathbf{V}$ in \eqref{eq : CCRB stoica formula} instead of $\mathbf{U}$ yields
\begin{equation} \label{eq : C}
\text{cov}({\hat{\bf{p}}}) = {\left( {{{\bf{V}}^\mathsf{T}}{\bf{JV}}} \right)^{ - 1}}
\end{equation}

Now, we shall show that the derived covariance matrix and the main CRB associated with the observation model \eqref{eq:Received Signal Radar} are in fact equal.

By substituting \eqref{eq: formula for J} into \eqref{eq : C} and also substituting $\mathbf{C_m}$ from \eqref{eq:C_m}, we have
\begin{equation} \label{eq: Alternate form of C}
\text{cov}({\hat{\bf{p}}}) \approx {\left(\mathbf{Z}^\mathsf{T}\mathbf{\Psi}_1\mathbf{Z} - \mathbf{Z}^\mathsf{T}\mathbf{\Psi}_2\mathbf{\Psi}_3^{-1}\mathbf{\Psi}_2^\mathsf{T}\mathbf{Z}\right)^{ - 1}}
\end{equation}
under condition C1-C2, where $\mathbf{Z}=\mathbf{B}^{-1}\mathbf{AV}$.

It is evident that \eqref{eq: Alternate form of C} has a similar form as \eqref{eq:CRB(p)}. After straightforward matrix multiplications it can be shown that $\mathbf{Z} =  \mathbf{\Xi}^\mathsf{T}$,
which immediately completes the proof. 
\end{IEEEproof}

\section{Numerical Simulations}\label{sec:Numerical Simulations}

We evaluate the proposed framework using the simulation parameters in
Table~\ref{TabSettSim}. Unless otherwise specified, these parameters are
used throughout this section.

\begin{table}[t]
    \caption{General simulation settings}
    \label{TabSettSim}
    \centering
    \footnotesize
    \begin{tabular}{p{0.53\columnwidth}c}
        \hline
        \textbf{Setting} & \textbf{Value} \\ \hline
        Carrier frequency / wavelength
        & $30$ GHz / $1$ cm \\
        Bandwidth / subcarrier spacing
        & $200$ MHz / $195.1$ kHz \\
        OFDM index / frame length
        & $N=1024$ / $M=30$ \\
        TX / RX antennas
        & $N_t=24$ / $N_r=20$ \\
        RIS elements / spacing
        & $N_s=100$ / $d=\lambda/2$ \\
        Power constraint / noise PSD
        & $P_\mathrm{T}=1$ W / $-170$ dBm/Hz \\
        Target RCS / path loss factor
        & $1~\mathrm{m}^2$ / $4$ dB \\
        User SINR threshold
        & $\varepsilon=15$ dB \\ \hline
    \end{tabular}
\end{table}

\subsection{Optimal Beampattern}

We first examine the beamforming solution for a target located $150$ m
from the BS at an azimuth of $45^\circ$ and two RIS-enabled users located
$100$ m from the BS at $20^\circ$ and $60^\circ$. With optimized RIS
phases, Fig.~\ref{FigPattern1A} shows prominent beams toward both RISs
and a smaller beam toward the target. Since the direct path is considerably stronger than the indirect BS--RIS--target paths, the A-optimal design in \eqref{eq:Opt Problem General Form} allocates more transmit gain toward the RISs to strengthen these otherwise weak but geometrically useful measurements. For this geometry, more gain is assigned to the $20^\circ$ path because its localization-information direction is more complementary to the direct path.

With random RIS phases, Fig.~\ref{FigPattern1B} shows that the optimizer
primarily illuminates the $20^\circ$ RIS and the direct path, while the
$60^\circ$ path is effectively discarded because its high path loss
makes additional power allocation inefficient. Consequently, the PEB
increases from $1.58$ m with optimized RIS phases to $14.44$ m with
random phases, illustrating the importance of passive beamforming.

\begin{figure}[t]
    \centering
    \subfloat[Optimal RIS phases]{
        \includegraphics[width=0.95\columnwidth]{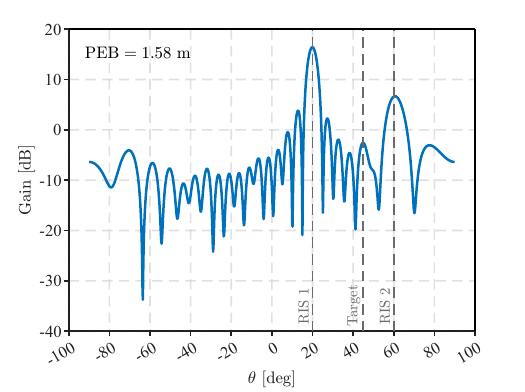}
        \label{FigPattern1A}}
    \hfill
    \subfloat[Random RIS phases]{
        \includegraphics[width=0.95\columnwidth]{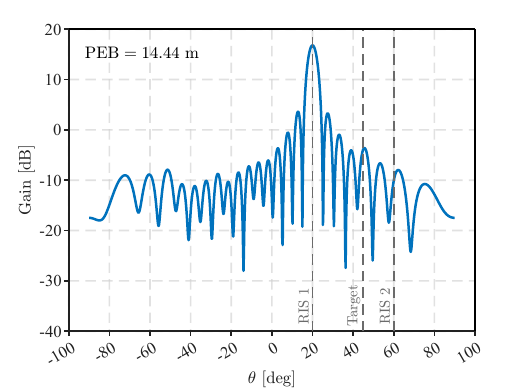}
        \label{FigPattern1B}}
    \caption{Optimized BS beampattern with (a) optimized and
    (b) random RIS phase shifts.}
    \label{FigPattern1}
\end{figure}

In another scenario, we repeated the experiment with additional RIS-enabled users, and the resulting beampattern is shown in Fig. \ref{FigPattern10}. The RIS phases are configured to be optimal.
We considered a target at $ \bp = [200, 0] $ ($ \theta=0^\circ $) and 10 RIS-enabled users positioned 100 meters from the BS at azimuth angles $ -75^\circ, -60^\circ, -45^\circ, -30^\circ, -15^\circ, 15^\circ, 30^\circ, 45^\circ, 60^\circ, 75^\circ$. The resulting beampattern, shown in Fig. \ref{FigPattern10}, reveals distinct beams directed toward the users’ angular positions. 

\begin{figure}[!htp]
	\centerline{\includegraphics[width = \columnwidth]{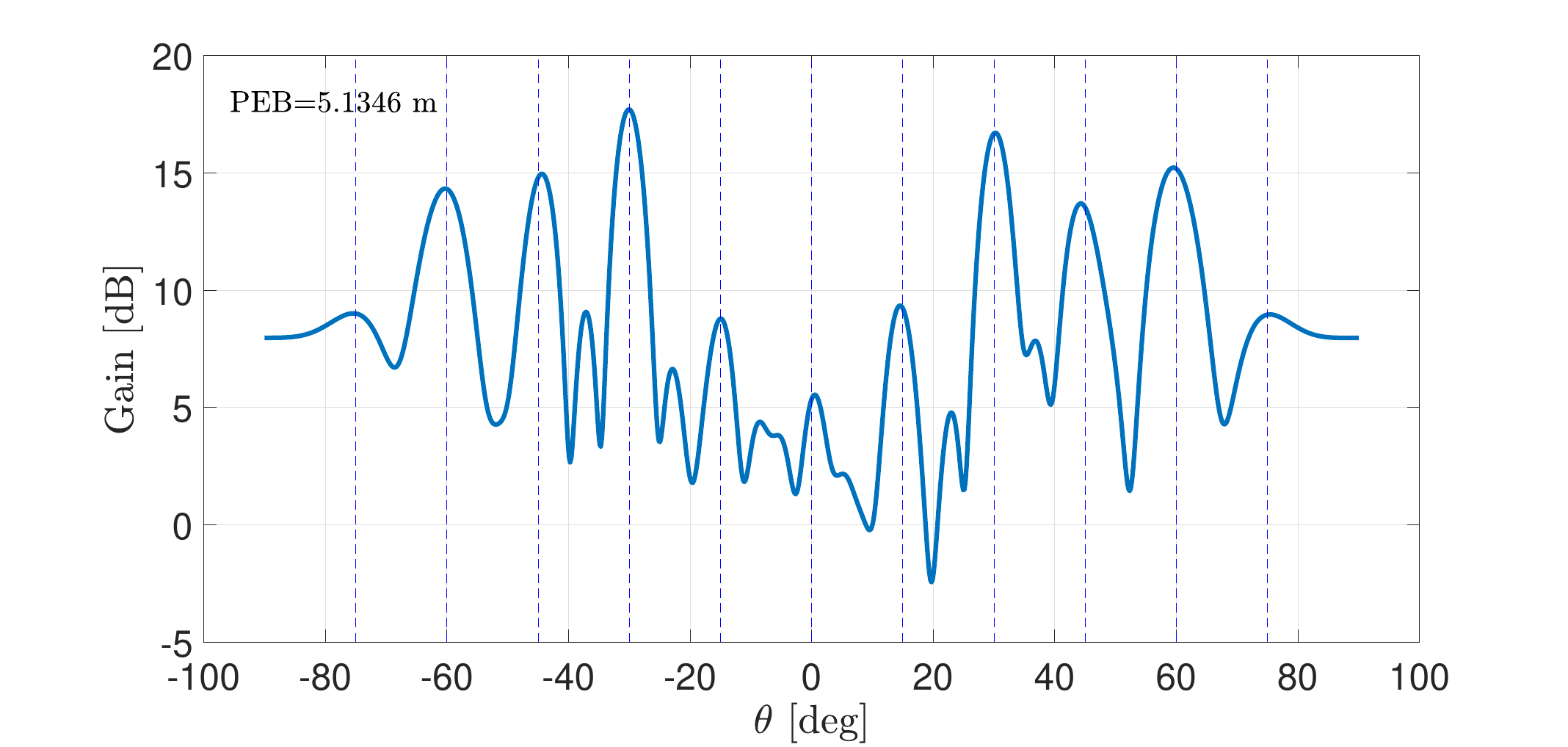}}
	\caption{The resulting beam-pattern in the presence of a target at $ \theta = 0^\circ $ and 10 RIS-enabled users at azimuth angles $ -75^\circ, -60^\circ, -45^\circ, -30^\circ, -15^\circ, 15^\circ, 30^\circ, 45^\circ, 60^\circ, 75^\circ$.}
	\label{FigPattern10}
\end{figure}

\subsection{Geometric Analysis}

We next illustrate the role of bistatic geometry. A RIS-enabled user is
placed at $[100,0]$ m, while the target position is varied over
$x\in[0,300]$ m and $y\in[-150,150]$ m. As shown in
Fig.~\ref{FigMapA}, localization accuracy deteriorates along the
half-line extending from the RIS toward positive $x$. In this region,
the circular fixed-delay contour of the direct path becomes nearly
tangent to the elliptical contour of the RIS-assisted path, so that
their ranging-information directions become nearly dependent.

Fig.~\ref{FigMapB} considers two RIS-enabled users. The additional
bistatic measurement provides another information direction and
substantially reduces the low-accuracy region. The gain therefore
depends not only on received power but also on the BS--RIS--target
geometry, suggesting that the PEB can serve as a criterion for
cooperative-RIS placement.

\begin{figure}[!t]
    \centering
    \subfloat[]{
        \includegraphics[width=0.43\textwidth]
        {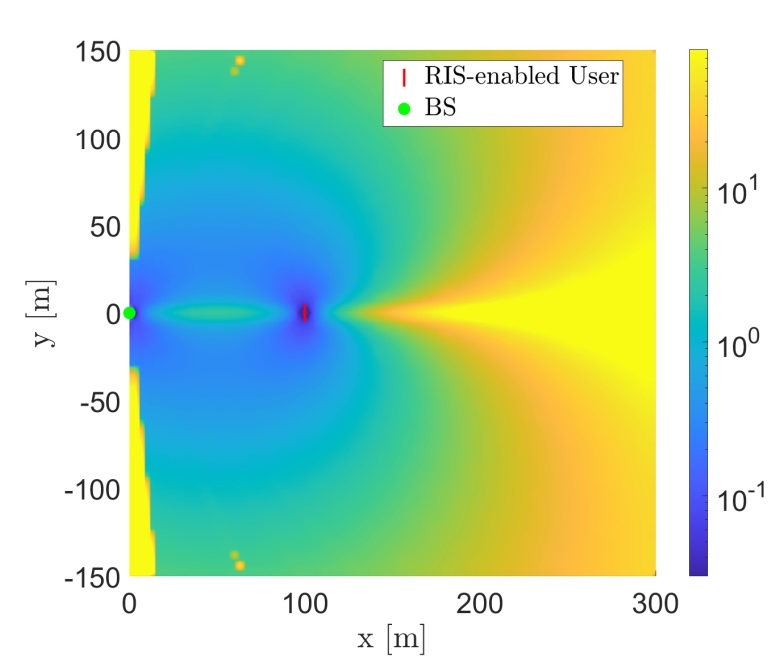}
        \label{FigMapA}}
    \hfill
    \subfloat[]{
        \includegraphics[width=0.43\textwidth]
        {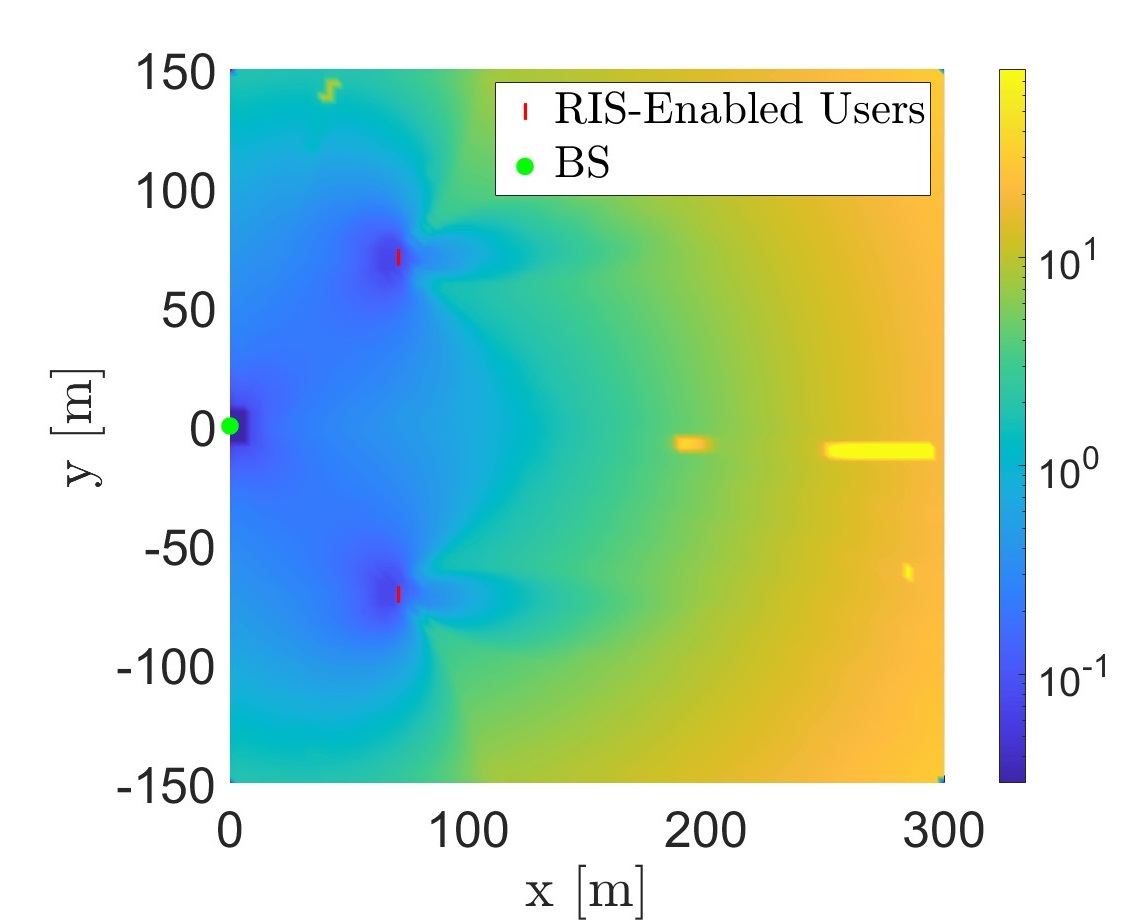}
        \label{FigMapB}}
    \caption{PEB obtained with the proposed beamforming method:
    (a) one RIS-enabled user and (b) two RIS-enabled users.}
    \label{FigMap}
\end{figure}

\subsection{ISAC Trade-off Analysis}

We investigate the effect of cooperative and ordinary users for a target
at $[150,150]$ m. User locations are independently generated within a
$200\times200$ m region and the results are averaged over Monte Carlo
realizations. To remain consistent with the ranging-dominant
beamforming model of
Section~\ref{sec:Dual-Functional Joint Active and Passive Beamforming}
and avoid a degenerate range-only geometry, the first user is
RIS-enabled in both cases. Here, the SINR threshold is
$\varepsilon=30$ dB.

As shown in Fig.~\ref{FigRmsVsK}, adding ordinary users degrades
localization because additional communication constraints consume
transmit resources without providing sensing information. In contrast,
adding RIS-enabled users improves localization since these users both
receive communication service and create additional bistatic sensing
paths. Since one RIS-enabled user is present in both cases, this is a
partially non-cooperative comparison rather than a pure BS-only
benchmark.

Fig.~\ref{FigRmsVsEpslion} further evaluates the effect of the SINR
requirement using eight users, where the first user is RIS-enabled and
the remaining users are either all ordinary or RIS-enabled. Increasing
$\varepsilon$ causes a considerably larger PEB degradation in the
ordinary-user case. With cooperative RIS-enabled users, the sensing and
communication objectives are more closely aligned, making localization
less sensitive to increasingly stringent communication requirements.


\begin{figure}[!t]
    \centering
    \includegraphics[width=0.95\columnwidth]{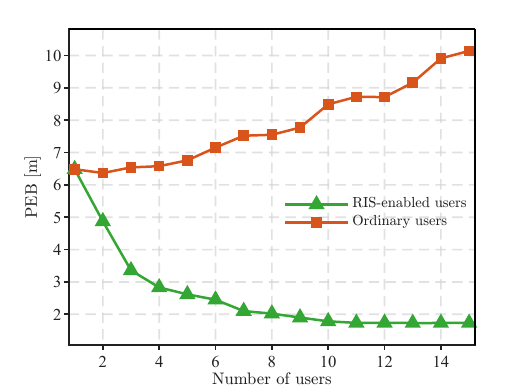}
    \caption{PEB versus the number of users for cooperative RIS-enabled and ordinary-user configurations.}
    \label{FigRmsVsK}
\end{figure}

\begin{figure}[!t]
    \centering
    \includegraphics[width=0.95\columnwidth]{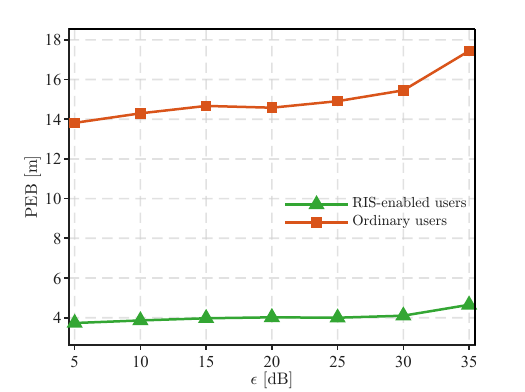}
    \caption{PEB versus the minimum communication SINR requirement.}
    \label{FigRmsVsEpslion}
\end{figure}

\subsection{Impact of System Parameters}

Finally, Fig.~\ref{fig:parametric} examines the parameters most directly
connected to the analytical results. Figs.~\ref{fig:peb_BW} and
\ref{fig:peb_pt} show that localization accuracy improves with bandwidth
and transmit power. In the ranging-dominant regime, the analysis
predicts approximately $1/B$ and $1/\sqrt{P_\mathrm{T}}$ PEB trends,
respectively. The RMSE of the proposed estimator also approaches the CRB
as the measurement quality improves, corroborating the first-order
efficiency result of Section~\ref{sec:Efficient Estimator Design}.

Fig.~\ref{fig:peb_Nris} directly verifies the RIS scaling laws derived
in Section~\ref{sec:Estimation-Theoretic Fundamental Limits}. On the
logarithmic axes, the PEB decreases approximately as $1/N_s$ with
coherently optimized RIS phases and as $1/\sqrt{N_s}$ with independent
random phases, in agreement with the predicted $N_s^2$ and $N_s$
Fisher-information gains, respectively.

\begin{figure*}[!t]
    \centering
    \subfloat[Signal bandwidth]{
        \includegraphics[width=0.30\textwidth]
        {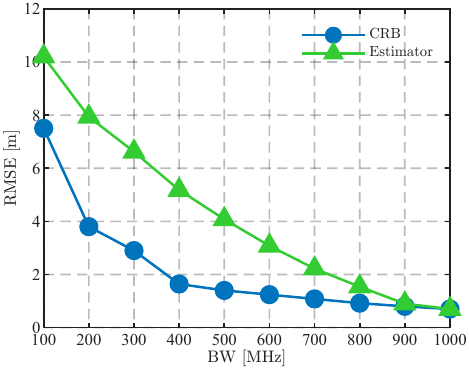}
        \label{fig:peb_BW}}
    \hfill
    \subfloat[Transmit power]{
        \includegraphics[width=0.30\textwidth]
        {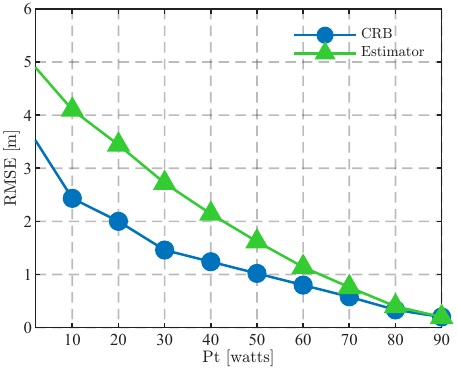}
        \label{fig:peb_pt}}
    \hfill
    \subfloat[Number of RIS elements]{
        \includegraphics[width=0.30\textwidth]
        {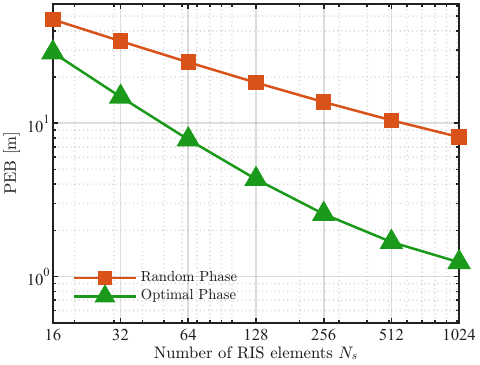}
        \label{fig:peb_Nris}}
    \caption{Localization performance versus key system parameters:
    (a) signal bandwidth, (b) transmit power, and
    (c) number of RIS elements.}
    \label{fig:parametric}
\end{figure*}

\section{Conclusion}\label{sec:Conclusion}
In this paper, we proposed a RIS-aided ISAC framework for joint target localization and multi-user communication in a 2D scenario. Theoretical analysis revealed that random RIS phase assignments improve Fisher information by a factor of $N_s$, while optimized phases achieve a gain of $N_s^2$. We designed an estimator combining AoA, direct ToA, and $K$ RIS-assisted indirect ToA measurements, whose covariance approaches the CRB to first order under small measurement errors and efficient first-stage ToA/AoA estimation. To balance sensing-communication trade-offs, we formulated a non-convex optimization problem for joint RIS phase and beamforming design and handled it using an iterative SCA procedure with convex subproblems. Simulations validated the theoretical bounds.

\appendices
\section{Calculation of the FIM} \label{sec:Calculation of the FIM}
Based on complex Gaussian distribution of the received signal in \eqref{eq:Received Signal Radar}, the elements of $\mathbf{J}_{\boldsymbol{\eta}}$, which is a $(3K+4)\times (3K+4)$ matrix, can be found as \cite{kay}
\begin{align}\label{eq:FIM Elements Def.}
\mathbf{J}_{\boldsymbol{\eta}}(x,y) &=
\mathbb{E}_{\mathbf{y}|\boldsymbol{\eta}}\left\{-\frac{\partial^2 \ln f(\mathbf{y}|\boldsymbol{\eta})}{\partial x \partial y}\right\} \nonumber\\
&= \frac{1}{\sigma^2_R} \sum_{n=-\frac{N}{2}}^{\frac{N}{2}}\Re\left\{\frac{\partial \boldsymbol{\mu}^\mathsf{H}[n]}{\partial x} \frac{\partial \boldsymbol{\mu}[n]}{\partial y}\right\}
\end{align}
where $f(\mathbf{y}|\boldsymbol{\eta})$ is the corresponding conditioned likelihood function\footnote{Note that $\mathbf{y}$ is the vectorized version of $\mathbf{Y}_{\mathrm{R}}$ for all instances and subcarriers.}, and $\boldsymbol{\mu}[n] = \sum_{k=0}^{K} \beta_k e^{-j2\pi n\Delta f \tau_k} \text{vec}\left\{\mathbf{G}_k \mathbf{X}[n]\right\}$.

Therefore, the elements of $\mathbf{\Psi}_1$ in \eqref{eq:J_eta partiotioned} can be expressed as
\begin{align} \label{eq:Psi_1}
\mathbf{\Psi}_1 & = 
\left[ \begin{matrix}
\mathbf{\Psi}_{1,1} & \boldsymbol{\psi}_{1,2}  \\
\boldsymbol{\psi}_{1,2}^{T} & \psi_{1,3}  \\
\end{matrix} \right]
\end{align}
\vspace{-.2cm}
where
\begin{align}
&\mathbf{\Psi}_{1,1}(i\!+\!1,k\!+\!1)  = \Re\left\{\beta_i^{\ast}\beta_k \mathrm{Tr}(\mathbf{G}_k\mathbf{R_X}\mathbf{G}_i^\mathsf{H}) f_2(\tau_i\!-\!\tau_k)\right\} \nonumber\\
&\boldsymbol{\psi}_{1,2}(i\!+\!1)  = \sum_{k=0}^{K} \Im \left\{\beta_i^{\ast}\beta_k \mathrm{Tr}(\mathbf{\dot{G}}_k\mathbf{R_X}\mathbf{G}_i^\mathsf{H}) f_1(\tau_i\!-\!\tau_k)\right\} \nonumber\\
&\psi_{1,3} = \sum_{i=0}^{K}\sum_{k=0}^{K} \Re \left\{\beta_i^{\ast}\beta_k \mathrm{Tr}(\mathbf{\dot{G}}_k\mathbf{R_X}\mathbf{\dot{G}}_i^\mathsf{H}) f_0(\tau_i\!-\!\tau_k)\right\} \nonumber \\
&i,k=0,1,\ldots,K
\end{align}
where $\mathbf{R_X}=\mathbb{E}\{\frac{1}{M}\mathbf{X}[n]\mathbf{X}^\mathsf{H}[n]\}=\mathbf{WW}^\mathsf{H}$ under the second-order stream model in Sec.~\ref{subsec:Transmitter Model}. Treating the scalar path gains as locally fixed nuisance amplitudes, the derivatives of $\mathbf{G}_k$ with respect to $\theta$ are
\begin{align}\label{eq: G_k dot}
\mathbf{\dot{G}}_0 & = \gamma_0[\mathbf{\dot{c}}(\theta)\mathbf{a}^\mathsf{H}(\theta) + \mathbf{c}(\theta)\mathbf{\dot{a}}^\mathsf{H}(\theta)] \nonumber\\
\mathbf{\dot{G}}_k & = \gamma_k \mathbf{\dot{c}}(\theta)\mathbf{a}^\mathsf{H}(\varphi_k), \quad k=1,2,\ldots,K 
\end{align}

Furthermore, the elements of $\mathbf{\Psi}_2$ in \eqref{eq:J_eta partiotioned} can be written as
\begin{align}\label{eq:Psi_2}
\mathbf{\Psi}_2 & = \left[ \mathbf{\Psi}_{2,1}^\mathsf{T}, \boldsymbol{\psi}_{2,2} \right]^\mathsf{T}
\end{align}
\vspace{-.2cm}
where
\begin{align}
&\mathbf{\Psi}_{2,1}(i\!+\!1,m)  = \Im\left\{\beta_i^{\ast} \mathrm{Tr}(\mathbf{G}_k\mathbf{R_X}\mathbf{G}_i^\mathsf{H}) f_1(\tau_i\!-\!\tau_k) \mathbf{v}^\mathsf{T}\right\} \nonumber\\
&\boldsymbol{\psi}_{2,2}(m)  = \sum_{i=0}^{K} \Re \left\{\beta_i^{\ast} \mathrm{Tr}(\mathbf{G}_k\mathbf{R_X}\mathbf{\dot{G}}_i^\mathsf{H}) f_0(\tau_i\!-\!\tau_k)\mathbf{v}\right\} \nonumber\\
&m=2k\!+\!1\!:\!2k\!+\!2,\quad i,k=0,1,\ldots,K
\end{align}
where $\mathbf{v}=[1,j]^\mathsf{T}$.

Finally, the elements of $\mathbf{\Psi}_3$ in \eqref{eq:J_eta partiotioned} is given by
\begin{align}
&\mathbf{\Psi}_{3}(l,m)  = \Re\left\{\mathrm{Tr}(\mathbf{G}_k\mathbf{R_X}\mathbf{G}_i^\mathsf{H}) f_0(\tau_i\!-\!\tau_k) \mathbf{Q}\right\} \\
&l=2i\!+\!1\!:\!2i\!+\!2,\quad m=2k\!+\!1\!:\!2k\!+\!2,\quad i,k=0,1,\ldots,K
\nonumber
\end{align}
where $\mathbf{Q}=\left[ \begin{matrix}
1 & j  \\
-j & 1  \\
\end{matrix} \right]$.
In the above equations, the functions $f_i(z), i=0,1,2$ are defined as
\begin{align}\label{eq:f_i(z)}
f_i(z) \triangleq \frac{M}{\sigma_R^2}\sum_{n=-\frac{N}{2}}^{\frac{N}{2}} (2\pi n\Delta f)^i e^{j2\pi n\Delta f z}
\end{align}
\vspace{-.7cm}
\section{Proof of Theorem 1} \label{sec:Proof of Theorem 1}
By direct matrix multiplication, the first term in the EFIM in \eqref{eq:CRB(p)} becomes
\begin{align} \label{eq: First term EFIM}
\mathbf{\Xi}&\mathbf{\Psi}_1\mathbf{\Xi}^\mathsf{T}\nonumber\\
& = \frac{1}{c^2}\sum_{i=0}^{K} \sum_{k=0}^{K}
\Re\left\{\beta_i^{\ast}\beta_k \mathrm{Tr}(\mathbf{G}_k\mathbf{R_X}\mathbf{G}_i^\mathsf{H}) f_2(\tau_i\!-\!\tau_k)\right\} \mathbf{e}_i\mathbf{e}_k^\mathsf{T} \nonumber\\
& + \frac{1}{r^2}\sum_{i=0}^{K}\sum_{k=0}^{K} \Re \left\{\beta_i^{\ast}\beta_k \mathrm{Tr}(\mathbf{\dot{G}}_k\mathbf{R_X}\mathbf{\dot{G}}_i^\mathsf{H}) f_0(\tau_i\!-\!\tau_k)\right\}\boldsymbol{\rho}\boldsymbol{\rho}^\mathsf{T} \nonumber\\
& + \frac{1}{rc}\sum_{i=0}^{K}\sum_{k=0}^{K} \Im \left\{\beta_i^{\ast}\beta_k \mathrm{Tr}(\mathbf{\dot{G}}_k\mathbf{R_X}\mathbf{G}_i^\mathsf{H}) f_1(\tau_i\!-\!\tau_k)\right\}\nonumber\\
&\hspace{35mm}\times\left(\mathbf{e}_i\boldsymbol{\rho}^\mathsf{T}
+\boldsymbol{\rho}\mathbf{e}_i^\mathsf{T}\right).
\end{align}

When paths are resolvable, i.e., $|\tau_i-\tau_k|>\frac{1}{B}$, the overlapping terms become negligible so that the terms related to $i=k$ are dominant. Therefore, \eqref{eq: First term EFIM} can be approximately written as
\vspace{-.5cm}
\begin{align} \label{eq: First term EFIM Approximate}
\mathbf{\Xi}\mathbf{\Psi}_1\mathbf{\Xi}^\mathsf{T}  &\approx
\frac{1}{c^2}\sum_{k=0}^{K} |\beta_k|^2\mathrm{Tr}(\mathbf{G}_k\mathbf{R_X}\mathbf{G}_k^\mathsf{H}) f_2(0) \mathbf{e}_k\mathbf{e}_k^\mathsf{T} \nonumber\\
& + \frac{1}{r^2} \sum_{k=0}^{K}|\beta_k|^2\mathrm{Tr}(\mathbf{\dot{G}}_k\mathbf{R_X}\mathbf{\dot{G}}_k^\mathsf{H}) f_0(0) \boldsymbol{\rho}\boldsymbol{\rho}^\mathsf{T} \nonumber\\
&+ \frac{1}{rc} \sum_{k=0}^{K}|\beta_k|^2\Im\{\mathrm{Tr}(\mathbf{\dot{G}}_k\mathbf{R_X}\mathbf{G}_k^\mathsf{H})\} f_1(0)\nonumber\\
&\hspace{35mm}\times\left(\mathbf{e}_k\boldsymbol{\rho}^\mathsf{T}+
\boldsymbol{\rho}\mathbf{e}_k^\mathsf{T}\right)
\end{align}

The second term of EFIM in \eqref{eq:CRB(p)} has two ingredients, i.e., $\mathbf{\Xi}\mathbf{\Psi}_2$ and $\mathbf{\Psi}_3^{-1}$, which can be approximated, in a similar manner, as
\begin{align}
&\mathbf{\Xi}\mathbf{\Psi}_2(:,m)  \approx \frac{1}{c}\mathrm{Tr}(\mathbf{G}_k\mathbf{R_X}\mathbf{G}_k^\mathsf{H}) f_1(0) \mathbf{e}_k [-\beta_k^{\mathcal{I}},\beta_k^{\mathcal{R}}]\nonumber\\
& \qquad\qquad\qquad+\frac{1}{r}\Re\left\{\beta_k^{\ast} \mathrm{Tr}(\mathbf{G}_k\mathbf{R_X}\mathbf{\dot{G}}_k^\mathsf{H}) f_0(0) \boldsymbol{\rho}\mathbf{v}^\mathsf{T}\right\}\\
&\mathbf{\Psi}_3^{-1} \approx \mathbf{\Lambda} \otimes \mathbf{I}_2, \,\, \mathbf{\Lambda}(k\!+\!1,k\!+\!1)=(\mathrm{Tr}(\mathbf{G}_k\mathbf{R_X}\mathbf{G}_k^\mathsf{H}) f_0(0))^{-1}
\end{align}
where $\mathbf{\Lambda}$ is a diagonal matrix and $m=2k\!+\!1\!:\!2k\!+\!2,\, k=0,1,\ldots,K$.

After some straightforward mathematical manipulations, the corresponding EFIM in \eqref{eq:CRB(p)} becomes
\begin{align}
\mathbf{J}_\text{e}(\mathbf{p}) &= \frac{1}{c^2}\sum_{k=0}^{K} |\beta_k|^2 [f_2(0)\!-\!f_1^2(0) f_0^{-1}(0)] \ell_k \mathbf{e}_k\mathbf{e}_k^\mathsf{T} \nonumber\\
& + \frac{1}{r^2} \sum_{k=0}^{K}|\beta_k|^2 f_0(0)(m_k-\frac{|n_k|^2}{\ell_k}) \boldsymbol{\rho}\boldsymbol{\rho}^\mathsf{T}
\end{align}
where $\ell_k$, $m_k$ and $n_k$ are defined in Theorem 1.
Denoting $\mathbf{J}_\text{r}(\theta,\psi_k,s_k)=\mathbf{e}_k\mathbf{e}_k^\mathsf{T}$ and $\mathbf{J}_{\text{a}}(\theta)=\boldsymbol{\rho}\boldsymbol{\rho}^\mathsf{T}$, and computing $f_i(0)$'s from \eqref{eq:f_i(z)} completes the proof.

Now, let us simplify the variables $\ell_k, m_k$ and $n_k$ in \eqref{eq:a_i def}-\eqref{eq:c_i def}.
Let us denote $\mathbf{a}(\theta)$, $\mathbf{c}(\theta)$ and $\mathbf{a}(\varphi_k)$ by $\mathbf{a}_{\theta}$, $\mathbf{c}_\theta$ and $\mathbf{a}_{\varphi_k}$, respectively, and so does for their derivatives.
By denoting $\mathbf{R_X}=\sum_{i=1}^{\tilde{K}}\mathbf{w}_i\mathbf{w}_i^\mathsf{H}$, and using definition of $\mathbf{G}_k$ and $\mathbf{\dot{G}}_k$ in \eqref{eq: Channel G_k} and \eqref{eq: G_k dot}, we have
\begin{align}\label{eq:Simplified l_k,m_k,n_k}
\!\!\ell_0 & = \gamma_0^2\mathrm{Tr}(\mathbf{c}_\theta\mathbf{a}^\mathsf{H}_\theta\sum_{i=1}^{\tilde{K}}\mathbf{w}_i\mathbf{w}_i^\mathsf{H}\mathbf{a}_\theta\mathbf{c}^\mathsf{H}_\theta) = \gamma_0^2\|\mathbf{c}_\theta\|^2\sum_{i=1}^{\tilde{K}}|\mathbf{a}^\mathsf{H}_\theta\mathbf{w}_i|^2 \nonumber\\
\!\!\ell_k & = \!|\gamma_k^{}|^2 \mathrm{Tr}(\mathbf{c}_\theta\mathbf{a}^\mathsf{H}_{\varphi_k}\sum_{i=1}^{\tilde{K}}\!\mathbf{w}_i\mathbf{w}_i^\mathsf{H}\mathbf{a}_{\varphi_k}\mathbf{c}^\mathsf{H}_\theta) \!=\! |\gamma_k^{}|^2 \|\mathbf{c}_\theta\|^2\sum_{i=1}^{\tilde{K}}|\mathbf{a}^\mathsf{H}_{\varphi_k}\mathbf{w}_i|^2 \nonumber\\
\!\!m_0 & = \gamma_0^2\mathrm{Tr}((\mathbf{\dot{c}}_\theta\mathbf{a}^\mathsf{H}_\theta+\mathbf{c}_\theta\mathbf{\dot{a}}^\mathsf{H}_\theta)\sum\nolimits_{i=1}^{\tilde{K}}\!\!\mathbf{w}_i\mathbf{w}_i^\mathsf{H}(\mathbf{a}_\theta\mathbf{\dot{c}}_\theta^\mathsf{H}+\mathbf{\dot{a}}_\theta\mathbf{c}_\theta^\mathsf{H})) \nonumber\\&= \gamma_0^2[\|\mathbf{\dot{c}}_\theta\|^2\sum_{i=1}^{\tilde{K}}|\mathbf{a}^\mathsf{H}_\theta\mathbf{w}_i|^2 + \|\mathbf{c}_\theta\|^2\sum_{i=1}^{\tilde{K}}|\mathbf{\dot{a}}^\mathsf{H}_\theta\mathbf{w}_i|^2] \nonumber\\
\!\!m_k & = \!|\gamma_k^{}|^2 \mathrm{Tr}(\mathbf{\dot{c}}_\theta\mathbf{a}^\mathsf{H}_{\varphi_k}\!\sum_{i=1}^{\tilde{K}}\!\mathbf{w}_i\mathbf{w}_i^\mathsf{H}\mathbf{a}_{\varphi_k}\mathbf{\dot{c}}^\mathsf{H}_\theta) \!=\! |\gamma_k^{}|^2 \|\mathbf{\dot{c}}_\theta\|^2\sum_{i=1}^{\tilde{K}}|\mathbf{a}^\mathsf{H}_{\varphi_k}\mathbf{w}_i|^2 \nonumber\\
\!\!n_0 & = \gamma_0^2\mathrm{Tr}(\mathbf{c}_\theta\mathbf{a}^\mathsf{H}_\theta\sum\nolimits_{i=1}^{\tilde{K}}\!\!\mathbf{w}_i\mathbf{w}_i^\mathsf{H}(\mathbf{a}_\theta^\mathsf{H}\mathbf{\dot{c}}_\theta+\mathbf{\dot{a}}_\theta^\mathsf{H}\mathbf{c}_\theta)) \nonumber\\&= \gamma_0^2\|\mathbf{c}_\theta\|^2\sum\nolimits_{i=1}^{\tilde{K}}\!\!\mathbf{a}^\mathsf{H}_\theta\mathbf{w}_i\mathbf{w}_i^\mathsf{H}\mathbf{\dot{a}}_\theta \nonumber\\
\!\!n_k & = |\gamma_k^{}|^2 \mathrm{Tr}(\mathbf{c}_\theta\mathbf{a}^\mathsf{H}_{\varphi_k}\sum_{i=1}^{\tilde{K}}\mathbf{w}_i\mathbf{w}_i^\mathsf{H}\mathbf{a}_{\varphi_k}\mathbf{\dot{c}}^\mathsf{H}_\theta) = 0
\end{align}
for $k = 1,2,\ldots,K$, where, for the centered ULA, the steering vector is orthogonal to its own derivative (e.g., $\mathbf{a}_\theta^\mathsf{H}\mathbf{\dot{a}}_\theta=0$ and $\mathbf{c}_\theta^\mathsf{H}\mathbf{\dot{c}}_\theta=0$). Steering vectors at different angles are not assumed orthogonal.

\bibliographystyle{IEEEtran}
\bibliography{refs}
\end{document}